\documentclass[12pt,onecolumn]{article}
\usepackage[margin=2.5cm]{geometry}
\usepackage[utf8]{inputenc}
\usepackage{amsmath,amssymb,amsthm} 
\usepackage{color}
\usepackage[font=small,labelfont=bf]{caption}
\usepackage{wrapfig}
\usepackage{enumitem}
\usepackage{authblk}
\usepackage{graphicx}
\usepackage{hyperref}
\usepackage{empheq}
\usepackage{wrapfig}
\usepackage{ulem}
\def \ve {\varepsilon}
\def \r {\boldsymbol{r}}
\def \v {\boldsymbol{v}}
\def \u {\boldsymbol{u}}
\def \n {\boldsymbol{n}}

\newtheorem{theorem}{Theorem}[section]
\newtheorem{proposition}{Proposition}
\newtheorem{remark}{Remark}
 
\begin{document} 

\title{A kinetic approach to active rods dynamics\\ in confined domains}


\author[1]{Leonid Berlyand}
\author[2]{Pierre-Emmanuel Jabin}
\author[1]{Mykhailo Potomkin}
\author[1,3]{El\.{z}bieta Ratajczyk}

\affil[1]{Department of Mathematics, Pennsylvania State University, University Park, 16802, USA}

\affil[2]{Department of Mathematics, University of Maryland, College Park, MD 20742-3289, USA}

\affil[3]{Department of Mathematics, Faculty of Electrical Engineering and Computer Science, Lublin University of
Technology, Nadbystrzycka 38A, 20-618 Lublin, Poland}

\date{}

\maketitle

\begin{abstract}
The study of active matter consisting of many self-propelled (active) swimmers in an imposed  
flow is important for many applications. Self-propelled swimmers may represent both living and artificial ones such as bacteria and chemically driven bi-metallic nano-particles. In this work we focus on a kinetic description of active matter represented by self-propelled rods swimming in a viscous fluid confined by a wall. It is well-known that walls may significantly affect the trajectories of active rods in contrast to unbounded or periodic containers. Among such effects are accumulation at walls and upstream motion (also known as negative rheotaxis). Our first main result is the rigorous derivation of boundary conditions for the active rods' probability distribution function in the limit of vanishing inertia. Finding such a limit is important due to (i) the fact that in many examples of active matter inertia is negligible, since swimming occurs in a low Reynolds number regime, and (ii) this limit allows us to reduce  the dimension $-$ and so computational complexity $-$ of the kinetic description. For the resulting model, we derive the system in the limit of vanishing translational diffusion which is also typically negligible for active particles. This system allows for tracking separately active particles accumulated at walls and active particles swimming in the bulk of the fluid.          
\end{abstract}

\tableofcontents


\section{Introduction}

Recently, active matter has attracted much attention of the scientific community (see e.g. reviews \cite{Ram2010,MarJoaRamLivProRaoSim2013,ElgWinGom2015}). In general, active matter is defined as a system of many agents moving due to consumption of energy stored in the surrounding environment (e.g., chemical or food) and converting it into mechanical force which is called {\it self-propulsion}. The agents exhibiting self-propulsion are named active, as opposed to passive agents which can move only if an external field is applied. There are a vast number of examples of active matter: from suspensions of bacteria \cite{HerStoGra05,HaiSokAraBer09,HaiAroBerKar10,RyaSokBerAra13,kaiser2014transport,PotTouBerAra2016,LopGacDouAurCle2015}, flocks of birds \cite{MotTad2011,CavGia2014,Pop2016}, and schools of fish \cite{KatTunIoa2011,TunKatIoa2013} to crowds of people \cite{PinVelCal2016} which also meet the definition of active matter since people exhibit self-propulsion (walking). Modeling and further analysis of active matter is of great importance due to the variety of striking phenomena and promising applications (reduction of viscosity, cargo delivery for medical purposes, materials repair, etc.).       
   
In this work, we are interested in modeling the wide class of active matter where agents are rod-shaped microswimmers, i.e., the surrounding environment is a viscous fluid and swimming occurs in the low Reynolds number regime. 
Examples of such microswimmers are bacteria (especially, rod-shaped {\it B. subtilis}) and active bi-metallic micro- and nanorods which swim in a viscous solution with hydrogen peroxide \cite{paxton2004catalytic}. It was observed both theoretically and experimentally for various types of microswimmers that their trajectories are much more complex than in the case of passive swimmers which simply follow streamlines of an external field (the background flow). In particular, the following phenomena were observed in dynamics of active microswimmers: accumulation at walls (bordertaxis) and upstream motion (rheotaxis), see \cite{Rot1963,BerTur1990,RamTulPha1993,FryForBerCum1995,VigForWagTam2002,BerTurBerLau2008,HilKalMcMKos2007,fu2012bacterial,YuaRaiBau2015rheo,PalSacAbrBarHanGroPinCha2015,PotTouBerAra2016,PotKaiBerAra2017,rheotaxis2018baker} and references therein. Throughout the paper we use the term ``active rods" for these rod-shaped active microswimmers.     

There are two common mathematical approaches to describe dynamics of an active rod in a viscous fluid.  The first one is based on force and torque balances for each individual active rod. This approach results in a Langevin equation (or a system of coupled Langevin equations) for unknown location, orientation and velocities, both translational and angular, of the active rod. In the second approach, which is also called a kinetic approach, the main unknown is the probability distribution function of the active rod, and the function satisfies the Fokker-Planck equation.  
These two approaches are directly related mathematically: roughly speaking, right hand sides of equations in the first approach are coefficients of the Fokker-Planck equation in the second approach. The second approach is more preferable if one studies statistical properties of a large number of active rods since it does not require many realizations, unlike the first direct one. 

The focus of this work is on the development of a kinetic approach for active rods swimming in a container restricted by a confinement (a wall). One can formulate how an active rod behaves when it collides with the wall (a collision rule) in the first approach. On the other hand, it is not immediately clear how the collision rule translates into a boundary condition for the Fokker-Planck equation. This is because a collision rule is typically a relation between velocities before and after a collision, whereas the Fokker-Planck equation is usually written in the vanishing inertia (overdamped) limit. Thus, the active rods' velocities are no longer variables of the unknown probability distribution function. The vanishing inertia limit of the Fokker-Planck equation  is relevant for the low Reynolds number regime and important since it allows one to reduce dimension and thus drastically decrease computational complexity and even make the equation amenable for analysis.  

Our first result is the rigorous derivation of this limit and, more importantly, the boundary condition for the overdamped Fokker-Planck equation in this limit. Namely, we show that the limiting probability distribution function, which depends on the active rod's location and orientation, satisfies the no-flux condition on the wall for each given orientation.            
We note that similar boundary conditions were phenomenologically and independently derived in \cite{EzhSai2015} to analyze the distribution of active rods inside an infinite channel.

Next, we consider the case of a small translational diffusion which is negligible in experiments for active particles and equated to zero in corresponding individual based models. By using the boundary layer multi-scale approach, we derive the kinetic system in the limit of vanishing translational diffusion.  The significance of the system is that it describes explicitly the population of active rods accumulated at walls.

The structure of this paper is as follows. 
First, in Section~\ref{sec:main-result} we formulate our two main results: on vanishing inertia and vanishing translational diffusion limits. Details of the results' derivation are relegated to Sections~\ref{sec:zero-inertia} and \ref{sec:nodiffusion}. Next, in Section~\ref{sec:numerics} we present a numerical example in which we compare the derived  limiting kinetic models with Monte Carlo simulations for the corresponding individual based model. Finally, in Section~\ref{sec:concrete} we provide a specific physical model of a self-propelled nano-particle swimming in a viscous flow by presenting both individual based model and the corresponding (pre-limiting) Fokker-Planck equation.


\section{Main results}
\label{sec:main-result}
We start with the Fokker-Planck equation describing random dynamics of  an active rod with inertia:  
\begin{eqnarray}\label{fokkerplanck_mr}
&&\partial_t f_{\ve}+\dfrac{1}{\ve}\v\cdot \nabla_{\r}f_\ve+\dfrac{1}{\ve^2}\nabla_{\v}\cdot\left((\ve \u-\v)f_\ve-D_{\mathrm{tr}}\nabla_{\v}f_\ve\right)+ \nonumber
\\
&&\hspace{86pt}+\dfrac{1}{\ve}\omega \partial_{\varphi}f_\ve+\dfrac{1}{\ve^2}\partial_{\omega}\left((\ve T-\omega)f_\ve-D_{\mathrm{rot}}\partial_\omega f_\ve\right)=0.
\end{eqnarray}
The unknown function $ f_\ve(t,\r,\v,\varphi,\omega)$ is the probability distribution function of the active rod's location ${\r}\in \Omega\subset \mathbb R^2$, translational velocity $\v\in \mathbb R^2$, orientation angle $\varphi \in [-\pi,\pi)$, and angular velocity $\omega\in \mathbb R$. Given functions $\u=\u(\r,\varphi)$ and $T=T(\r,\varphi)$ are smooth in $\Omega \times [-\pi,\pi)$ and $2\pi$-periodic in $\varphi$. Small positive parameter $\varepsilon\ll 1$ measures the effect of inertia on dynamics of an active rod. The unknown function $f_\ve$ is $2\pi$-periodic in $\varphi$.
In addition, the following boundary condition is imposed on  $f_\ve$: 
\begin{equation}\label{bc_fokker-planck_mr}
\v f_\ve\cdot \n =\v' f_\ve' \cdot (-\n),\,\,\,\r\text{ on }\Gamma.
\end{equation} 
Here $\Gamma$ is the boundary of $\Omega$ (a wall of the container), $\n$ is the outward normal, and $ f_\ve'=  f_\ve(t,\r,\v',\varphi,\omega')$ where pairs $(\v,\omega)$ and $(\v',\omega')$ represent translational and angular velocities of the active rod before and after a collision with the wall. The relation between the two pairs of velocities is given by:  
\begin{equation}\label{collision_relation}
\left[\begin{array}{c}\v'\\\omega'\end{array}\right]=\mathcal{C}\left[\begin{array}{c}\v\\ \omega \end{array}\right], \quad \mathcal{C}\in \mathbb R^{3\times 3}, \quad |\det{\mathcal{C}}|=1. 
\end{equation}
The specific form of matrix $\mathcal{C}$ as well as the derivation of the kinetic model \eqref{fokkerplanck_mr}-\eqref{bc_fokker-planck_mr} from the individual dynamics of an active rod are relegated to Section \ref{sec:individual-rod}. 

To simplify notations, denote $\mathcal{X}:=({\r},\varphi)$, $ {\mathcal{V}}:=(\v,\omega)$, $\mathcal{U}:=(\u,T)$, and  
$$\mathcal{D}:=\left[\begin{array}{ccc}D_{\mathrm{tr}}&0&0\\0&D_{\mathrm{tr}}&0\\0&0&D_{\mathrm{rot}}\end{array}\right].
$$
In new notations, the Fokker-Planck equation \eqref{fokkerplanck_mr} is 
\begin{equation}\label{fp_with_u_and_diffusion_0_mr}
\partial_t f_\ve + \dfrac{1}{\varepsilon}\mathcal{V}\cdot \nabla_{\mathcal{X}}f_\ve+
\dfrac{1}{\varepsilon^2}\nabla_{{\mathcal{V}}}\cdot\left((\varepsilon \mathcal{U}-\mathcal{V})f_\ve - \mathcal{D}\nabla_{\mathcal{V}}f_\ve\right) = 0.
\end{equation}

{\it Our first main result} is the reduction, for small $\ve$, of the unknown function $f_\ve(t,\mathcal{X},\mathcal{V})$ depending on a 7-dimensional variable to the unknown function $\rho(t,\mathcal{X})$ depending on a 4-dimensional variable. 
The result is formulated in the following theorem. 

\begin{theorem}\label{thm:main_mr}
	Let  $\rho_\ve$ be defined by
	\begin{equation}
	\rho_\ve(t,\mathcal{X}):=\int_{\mathbb R^3}f_\ve(t,\mathcal{X},\mathcal{V})\,\mathrm{d}\mathcal{V},\nonumber
	\end{equation}
	where $f_\ve$ solves Fokker-Planck equation \eqref{fp_with_u_and_diffusion_0_mr} with boundary condition \eqref{bc_fokker-planck_mr}. Then $\rho_\ve$ converges to $\rho$ in the distributional sense, as $\ve\to 0$, where $\rho$ satisfies the following limiting equation
	\begin{equation}\label{limiting_fp_mr}
	\partial_t \rho +\nabla_\mathcal{X}\cdot (\mathcal{U}\rho) =\nabla_{\mathcal{X}}\cdot\mathcal{D}\nabla_\mathcal{X}\rho
	\end{equation}
	with the boundary condition 
	\begin{equation}\label{no_flux_bc_0_mr}
	D_{\mathrm{tr}}\dfrac{\partial \rho}{\partial \n}=(\u\cdot \n)\rho, \quad {\bf r}\text{ on }\Gamma,\quad -\pi\leq \varphi< \pi. 
	\end{equation}    
	Recall that $\mathcal{U}=(\u,T)$.
\end{theorem}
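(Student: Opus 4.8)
The plan is to use a Hilbert-type expansion in $\varepsilon$ combined with moment identities and a careful treatment of the collision boundary condition. First I would write $f_\ve = M_\ve(\mathcal{V}\,|\,\mathcal{X})\,g_\ve(t,\mathcal{X},\mathcal{V})$, or more directly postulate an ansatz $f_\ve = f^{(0)} + \varepsilon f^{(1)} + \varepsilon^2 f^{(2)} + \cdots$ and match powers of $\varepsilon$ in \eqref{fp_with_u_and_diffusion_0_mr}. The leading $\varepsilon^{-2}$ balance forces $\nabla_{\mathcal{V}}\cdot\bigl((\mathcal{U}-\mathcal{V})f^{(0)} - \mathcal{D}\nabla_{\mathcal{V}}f^{(0)}\bigr)=0$ when $\mathcal U$ is frozen; wait — more precisely the $\varepsilon^{-2}$ term is $\nabla_{\mathcal{V}}\cdot(-\mathcal{V}f^{(0)} - \mathcal{D}\nabla_{\mathcal{V}}f^{(0)})=0$, whose integrable-in-$\mathcal V$ solution is the Gaussian $f^{(0)} = \rho(t,\mathcal{X})\,G(\mathcal{V})$ with $G$ the centered Maxwellian of covariance $\mathcal{D}$ (an Ornstein--Uhlenbeck stationary density). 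This identifies the structure of the limit; the macroscopic density $\rho$ is then exactly $\lim \rho_\ve$. The operator $L h := \nabla_{\mathcal{V}}\cdot(-\mathcal{V}h - \mathcal{D}\nabla_{\mathcal{V}}h)$ is the OU generator's adjoint, Fredholm on the weighted $L^2$, with one-dimensional kernel spanned by $G$ and range orthogonal to constants — this is the solvability framework I would invoke for the correctors.

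Next I would extract the equation for $\rho$. At order $\varepsilon^{-1}$ one gets $L f^{(1)} = \mathcal{V}\cdot\nabla_{\mathcal{X}} f^{(0)} - \nabla_{\mathcal V}\cdot(\mathcal U f^{(0)})$; solvability (integrating in $\mathcal V$, using that $\int \mathcal V\, G\,\mathrm d\mathcal V=0$) is automatic, and one solves $f^{(1)} = -\,\chi(\mathcal V)\cdot(\nabla_{\mathcal X}\rho\big/\text{something}) + \mathcal U\cdot(\dots)\,G + \rho^{(1)}G$, where $\chi$ solves $L\chi = \mathcal V G$ componentwise (so $\chi_j(\mathcal V)= -\mathcal V_j G$ up to the kernel, since $L(\mathcal V_j G) = -\mathcal V_j G$). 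At order $\varepsilon^0$, $\partial_t f^{(0)} + \mathcal V\cdot\nabla_{\mathcal X} f^{(1)} + \nabla_{\mathcal V}\cdot((\dots)f^{(1)}) = L f^{(2)}$; integrating this identity in $\mathcal V$ kills the $L f^{(2)}$ and the velocity-divergence terms, leaving $\partial_t \rho + \nabla_{\mathcal X}\cdot\!\int \mathcal V f^{(1)}\,\mathrm d\mathcal V = 0$. Plugging in $f^{(1)}$ and using $\int \mathcal V_i \mathcal V_j G\,\mathrm d\mathcal V = \mathcal D_{ij}$ gives precisely $\partial_t\rho + \nabla_{\mathcal X}\cdot(\mathcal U\rho) = \nabla_{\mathcal X}\cdot\mathcal D\nabla_{\mathcal X}\rho$, which is \eqref{limiting_fp_mr}. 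To make this rigorous rather than formal I would supplement the expansion with an energy/entropy estimate: multiply \eqref{fp_with_u_and_diffusion_0_mr} by $f_\ve/G$ (relative entropy or $L^2(G^{-1})$ norm), integrate, and show the dissipation controls $\|f_\ve - \rho_\ve G\|$ at order $\varepsilon$, giving weak compactness of $\rho_\ve$ and passage to the limit in the weak formulation against test functions $\psi(t,\mathcal X)$ — noting the boundary terms from integration by parts in $\mathcal X$ must be shown to converge to the correct flux, which is where \eqref{no_flux_bc_0_mr} will emerge.

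The boundary condition is the crux and the main obstacle. The pre-limit condition \eqref{bc_fokker-planck_mr} equates the \emph{normal velocity flux} of incoming and (reflected, via $\mathcal C$) outgoing particles on $\Gamma$; since $|\det\mathcal C|=1$ this is measure-preserving in velocity space, so integrating \eqref{bc_fokker-planck_mr} against $1$ over all $\mathcal V$ (splitting $\mathcal V\cdot\n \gtrless 0$ and changing variables $\mathcal V\mapsto \mathcal V'$ on the outgoing half) shows the \emph{total} normal flux $\int_{\mathbb R^3} (\mathcal V\cdot\n) f_\ve\,\mathrm d\mathcal V$ — wait, here $\n$ is in $\r$-space so it is $\int (\v\cdot\n) f_\ve\,\mathrm d\v\,\mathrm d\omega$ — vanishes on $\Gamma$ for every $\varphi$. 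This is the $\varepsilon$-independent conservation statement. Passing to the limit, the $\r$-flux of $\rho$ is read off from the $\varepsilon^{-1}\v\cdot\nabla_\r f_\ve$ contribution: it equals $\int \v f_\ve\,\mathrm d\v\,\mathrm d\omega = \u\rho - D_{\mathrm{tr}}\nabla_\r\rho + o(1)$ (the $\varphi$-components of $\mathcal V$, i.e. $\omega$, contribute the $\varphi$-flux but $\n$ has no $\varphi$-component), so the vanishing normal flux becomes exactly $D_{\mathrm{tr}}\,\partial\rho/\partial\n = (\u\cdot\n)\rho$ for each fixed $\varphi$, which is \eqref{no_flux_bc_0_mr}. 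The delicate points I expect to fight with: (i) justifying the trace of the limiting flux on $\Gamma$ (the boundary layer in velocity near $\mathcal V\cdot\n=0$ — a Milne-type problem — must be shown not to contribute at leading order, which typically needs the $|\det\mathcal C|=1$ structure and some control on the incoming trace); (ii) handling the fact that $\mathcal C$ mixes $\v$ and $\omega$, so the change of variables in the outgoing integral is genuinely $3$-dimensional and one must check no spurious $\omega$-flux survives; and (iii) the regularity of $\Omega$ and $\Gamma$ needed for the trace theorems and for defining $\partial\rho/\partial\n$. I would organize the rigorous argument around a weak formulation in which the boundary condition \eqref{bc_fokker-planck_mr} appears as the natural boundary term, so that \eqref{no_flux_bc_0_mr} is obtained in the limit without ever differentiating $\rho$ up to the boundary explicitly.
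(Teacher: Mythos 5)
Your proposal is structurally sound and shares the paper's two key observations: (a) compactness comes from an entropy estimate whose boundary contribution cancels thanks to $|\det\mathcal C|=1$ together with conservation of kinetic energy at collisions (Proposition~\ref{prop:entropy}; note your relative-entropy functional produces the same boundary term, so you need this cancellation there too, not only for the flux identity); and (b) integrating the collision condition \eqref{bc_fokker-planck_mr} over $\mathcal V$ with the measure-preserving change of variables $\mathcal V\mapsto\mathcal C\mathcal V$ gives the exact, $\ve$-independent identity $\hat J_\ve\cdot\n=0$ on $\Gamma$ for each $\varphi$. Where you genuinely differ is the interior derivation: you run a Hilbert expansion around the Maxwellian with OU solvability, whereas the paper closes the first two moment equations \eqref{eqrho}--\eqref{eqJ} directly, using the weak limits $\mathbb P_\ve\rightharpoonup\rho\,\mathbb I$ and $J_\ve-\rho_\ve\mathcal U\rightharpoonup J-\rho\,\mathcal U$ supplied by the entropy dissipation. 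Both routes yield \eqref{limiting_fp_mr}; the moment route avoids constructing correctors and is what the paper actually makes rigorous.

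The step you flag but do not execute is the crux of the theorem: weak interior convergence of $J_\ve$ does not let you evaluate the normal trace of the limiting flux on $\Gamma$, so you cannot simply ``read off'' \eqref{no_flux_bc_0_mr} from $\hat J_\ve\cdot\n=0$, and your item (i) (the velocity boundary layer) is exactly where a naive trace argument fails. The paper's resolution, which you should make explicit, is a test-function device. Testing \eqref{fp_with_u_and_diffusion_0_mr} with $\psi=\ve v_i h_i$ produces a boundary term proportional to $\sigma\,(h_1n_1+h_2n_2)$, since $v_i-v_i'=\sigma n_i$; this vanishes identically once $\boldsymbol h$ is chosen tangential on $\Gamma$, which characterizes $J=\rho\,\mathcal U-\nabla_{\mathcal X}\rho$ weakly, but only against vector fields tangential at the boundary (equation \eqref{weak_formula_for_J}). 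One then takes scalar test functions $g$ with $\partial g/\partial\n|_\Gamma=0$, for which $\boldsymbol h=\nabla_{\mathcal X}g$ is admissible, substitutes into the continuity identity \eqref{eq_for_weak_rho}, integrates by parts, and varies $g$ on $\Gamma$ in \eqref{eq_for_rho_strong} to extract \eqref{no_flux_bc_0_mr} --- no trace of $J_\ve$ or of $\nabla_{\mathcal X}\rho$ up to the boundary is ever taken. Without this (or an equivalent construction) your boundary-condition argument remains formal; with it, your plan delivers the theorem.
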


The proof of Theorem \ref{thm:main_mr} is given in Section \ref{sec:zero-inertia}. This theorem means that elastic collision boundary condition \eqref{bc_fokker-planck_mr} for $f_\ve$ transforms into no-flux boundary condition \eqref{no_flux_bc_0_mr} for $\rho$. The vanishing inertia limit in the Fokker-Planck equation for spherical particles (no $\varphi$) was considered in 
  \cite{GouJabVas2004,MelVas2007,MelVas2008}. The main difference, besides no $\varphi$, is that in \cite{GouJabVas2004,MelVas2007,MelVas2008}, $\u$ is not given  but instead solves the Navier-Stokes equation. In principle, a system with Navier-Stokes equation is obviously more complicated; on the other hand, Navier-Stokes equation has an additional dissipation term in the energy relation. Also, the coupling term leads to a certain cancellation in the energy relation (the term $\iint \mathcal{U}\cdot (\mathcal{V}-\ve \mathcal{U})f_\ve$ in \eqref{energy_calculation_0} from Section~\ref{sec:zero-inertia}). 
The scaling (that is, how $\ve$ is introduced in Fokker-Planck equation) in \eqref{fp_with_u_and_diffusion_0_mr} is similar to \cite{GouJabVas2004}, but the boundary conditions in \cite{GouJabVas2004} are simpler (periodic).
In \cite{MelVas2008} the asymptotic regime of the Fokker-Planck equation is studied for the reflection boundary condition (spherical particles elastically collide with walls) as well, but due to no slip conditions for $\boldsymbol{u}$ (which imply no flow at boundary $\Gamma$), the limiting boundary conditions were not investigated.

In experimental observations, rod-like microswimmers, such as bacteria or bi-metallic particles, are more likely to spontaneously turn rather than jump to another position. 
In other words, random forces along the perimeter of a microswimmer caused by collisions with molecules of the fluid likely result in a significant torque whereas a net force is small.   
These observations imply that the translational diffusion coefficient is small, $D_{\text{tr}}\ll 1$. Hence, it is natural to study the limit $D_{\text{tr}}\to 0$ which is singular in the case of active particles, that is one cannot simply set $D_{\text{tr}}$ to zero in both Fokker-Planck equation \eqref{limiting_fp_mr} and no-flux condition \eqref{no_flux_bc_0_mr} in order to obtain the limiting system as $D_{\text{tr}}\to 0$. This is because active particles tend to accumulate at walls and in particular they form a boundary layer of the width $\sim D_{\text{tr}}$. Moreover, from boundary conditions \eqref{no_flux_bc_0_mr} it follows that $\nabla \rho$ may blow up near boundary $\Gamma$ in the limit $D_{\text{tr}}\to 0$; in this case  for stable numerical simulation of \eqref{limiting_fp_mr}-\eqref{no_flux_bc_0_mr} with small $D_{\text{tr}}$ one needs a very fine mesh resulting in high computational complexity of the simulations. 

 To investigate vanishing translational diffusion $D_{\text{tr}}$ we re-denote the translational diffusion coefficient by symbol $\delta$ which is typically used for notations of small parameters: 
\begin{equation*}
\delta:=D_{\mathrm{tr}}.
\end{equation*}

Then problem \eqref{limiting_fp_mr}-\eqref{no_flux_bc_0_mr} consists of the Fokker-Planck equation
\begin{equation}\label{FP_no_diff}
\partial_t \rho+ \nabla_{\boldsymbol{r}}\cdot (\boldsymbol{u}\rho) +\partial_{\varphi}(T\rho)=\delta\Delta_{\boldsymbol{ r}}\rho+D_{\text{rot}}\partial^2_{\varphi}\rho
\end{equation}
and the boundary condition
\begin{equation}\label{no_flux_bc_b_layer}
\delta\dfrac{\partial \rho}{\partial \boldsymbol{n}}=(\boldsymbol{u}\cdot \boldsymbol{n})\rho,\quad \boldsymbol{r}\text{ on }\Gamma,\quad-\pi\leq \varphi< \pi.
\end{equation}       

{\it Our second main result} is the derivation of the limit in problem \eqref{FP_no_diff}-\eqref{no_flux_bc_b_layer} as $\delta \to 0$. The derivation is done by formal multi-scale asymptotic expansion; we formulate the result as a conjecture since rigorous justification of the multi-scale asymptotic expansions are out of the scope of this work.     

\smallskip 

\noindent{\bf Conjecture:} {\it In the limit $\delta\to 0$ the probability distribution function $\rho(t, \boldsymbol{r},\varphi)$ has the following representation: 
\begin{equation}
\label{limiting_representation}
\rho(t,\boldsymbol{r},\varphi)=\psi_{\mathrm{wall}}(t,\boldsymbol{r},\varphi)\, \delta_{\Gamma}(\boldsymbol{ r})+\rho_{\mathrm{bulk}}(t,\boldsymbol{r},\varphi).
\end{equation}
Here $\delta_{\Gamma}(\boldsymbol{r})$ is the $\delta$-function distribution supported on $\Gamma=\left\{\boldsymbol{\gamma}(s):0\leq s \leq L\right\}$ ($s$ is arc-length parameter of curve $\Gamma$) and 
probability distribution functions $\rho_{\text{bulk}}$ and $\psi_{\text{wall}}$ solve the following system: 
	\begin{empheq}[right=\empheqrbrace]{align}
\nonumber&\partial_t \rho_{\mathrm{bulk}}+\nabla_{\boldsymbol{r}}\cdot (\boldsymbol{u}\rho_{\mathrm{bulk}})+\partial_{\varphi}(T\rho_{\mathrm{bulk}})=D_{\mathrm{rot}}\partial^2_{\varphi}\rho_{\mathrm{bulk}}+\sum\limits_{i=1,2}\chi_i\,\delta_{\Gamma}({\boldsymbol{r}})\delta(\varphi-\varphi_i), \\
\nonumber& \hspace{275pt}{\boldsymbol{r}}\in \Omega, \quad -\pi\leq \varphi<\pi,\\
\nonumber&\partial_{t}\psi_{\mathrm{wall}}+\partial_s((\boldsymbol{u}\cdot \boldsymbol{\tau}) \psi_{\mathrm{wall}})+\partial_\varphi \left(T \psi_{\mathrm{wall}}\right)= D_{\mathrm{rot}}\partial^2_\varphi \psi_{\mathrm{wall}}+(\boldsymbol{u}\cdot \boldsymbol{n})\rho_{\mathrm{bulk}},\\ &\hspace{245 pt}\boldsymbol{r}\in\Gamma, \, s\in [0,L],\, \varphi\in (\varphi_1,\varphi_2),\label{eqn:no-diff}\\
\nonumber&\rho_{\mathrm{bulk}}=0,\quad\boldsymbol{r}\in\Gamma,\quad \varphi\notin [\varphi_1,\varphi_2],\quad \rho_{\mathrm{bulk}} \,\mathrm{ is }\, 2\pi\, \mathrm{-periodic\,in}\,\varphi\,\mathrm{ for\, all }\, \boldsymbol{r}\in \Omega,\\
\nonumber&\psi_{\mathrm{wall}}=0,\quad s\in [0,L],\quad \varphi\in\left\{\varphi_1,\varphi_2\right\},\\
\nonumber&\chi_i(s,\varphi):=(-1)^i\left(T \psi_{\mathrm{wall}}-D_{\mathrm{rot}}\partial_{\varphi}\psi_{\mathrm{wall}}\right),\quad i=1,2.
\end{empheq}
Here $\boldsymbol{\tau}$ denotes the tangential vector of $\Gamma$ and angles $\varphi_1(s)$ and $\varphi_2(s)$ are introduced such that 
\begin{equation*}
\text{$\boldsymbol{u}\cdot \boldsymbol{n}>0$ for $\varphi\in (\varphi_1(s),\varphi_2(s))$, and $\boldsymbol{u}\cdot \boldsymbol{n}\leq 0$ otherwise.}
\end{equation*}  
}

Representation \eqref{limiting_representation} means that the total probability distribution function $\rho$ consists of the regular part, $\rho_{\text{bulk}}$, describing distribution of particles in the bulk, and the singular part, $\psi_{\text{wall}}\,\delta_{\Gamma}$, describing distribution of particles accumulated at wall. Derivation of system \eqref{eqn:no-diff} is presented in Section~\ref{sec:nodiffusion}. We also provide a numerical example in Section~\ref{sec:numerics} in which we test the derived kinetic approaches \eqref{FP_no_diff}-\eqref{no_flux_bc_b_layer} and \eqref{eqn:no-diff} with results of Monte-Carlo simulations for the corresponding individual based model.


\section{Vanishing inertia limit in Fokker-Planck equation}
\label{sec:zero-inertia}

\noindent {\it Proof of Theorem \ref{thm:main_mr}.} In this section we take $D_{\text{tr}}=D_{\text{rot}}=1$ for the sake of simplicity.  To consider the limit $\ve\to 0$, introduce the mean flux (or, the mean velocity), and ``the kinetic pressure": 
\begin{equation*}
J_\ve (t,\mathcal{X}):=\dfrac{1}{\ve}\int_{\mathbb R^3} {\mathcal{V}} f_\ve  \,\mathrm{d}\mathcal{V}\quad\text{and}\quad\mathbb P_\ve(t,\mathcal{X}):=\int_{\mathbb R^3}{\mathcal{V}}\otimes {\mathcal{V}} f_\ve\, \text{d}{\mathcal{V}}.
\end{equation*}

\noindent By integration of \eqref{fp_with_u_and_diffusion_0_mr} with respect to $\text{d}\mathcal{V}$  and $\varepsilon \mathcal{V}\,\text{d}\mathcal{V}$ one obtains the system for $\rho_\ve$ and $J_\ve$:
\begin{align}
&  \partial_t \rho_\ve +\nabla_{\mathcal{X}}\cdot J_\ve =0,\label{eqrho}\\
\ve^2& \partial_t J_\ve +\nabla_{\mathcal{X}} \cdot \mathbb P_\ve=\rho_\ve \,\mathcal{U} - J_\ve.  \label{eqJ}
\end{align}
By using arguments similar to \cite{GouJabVas2004}, we will show that the limit of \eqref{eqJ} is $J=\rho \,\mathcal{U} -\nabla_{\mathcal{X}} \rho$. Substitution of this formula for $J$ into the limiting version of the equation \eqref{eqrho} (that is, equation \eqref{eqrho} without sub-indexes $\ve$), then gives:
\begin{equation*}
\partial_t \rho +\nabla_{\mathcal{X}}\cdot (\mathcal{U}\rho) =\Delta_{\mathcal{X}}\rho.
\end{equation*} 
The main question is how to find the boundary condition for $\rho$. Note that from  collision boundary condition \eqref{bc_fokker-planck_mr} it follows that active rods cannot leave domain $\Omega$, that is, there is no flux through the boundary $\Gamma$: 
\begin{equation*}
\hat{J}_\ve\cdot\boldsymbol{n}=0 \text{ on }\Gamma, 
\end{equation*}
where $\hat{J}_\ve=(J_1,J_2)$ (no $J_3$, corresponding to the flux of orientations $\varphi$). 
The main purpose of this section is to prove that this relation is preserved in the limit $\ve\to 0$, which, taking into account the formula for the limiting flux $J=\rho \,\mathcal{U} -\nabla_{\mathcal{X}} \rho$, is equivalent to 
\begin{equation}\label{no_flux_bc}
\dfrac{\partial \rho}{\partial \boldsymbol{n}}=(\boldsymbol{u}\cdot \boldsymbol{n})\rho.
\end{equation}     
In order to prove Theorem~\ref{thm:main_mr} we will use two auxiliary propositions. In Proposition~\ref{prop:entropy}, the energy estimate is established. This estimate  leads to a priori bounds needed to obtain that the family $\left\{f_\ve \right\}_\ve$ has a limit as $\ve \to 0$ (Proposition \ref{prop2}).   

 First introduce the following notations: 
 \begin{eqnarray*}
 \mathcal{E}_\ve(t)&:=&\int_{\mathbb R^2\times \mathbb R}\int_{ \Omega\times(-\pi,\pi)}\left\{\frac{\boldsymbol{v}^2}{2}+\frac{\omega^2}{2}+\ln f_\ve\right\}f_\ve \, \text{d}\mathcal{X}\text{d}{\mathcal{V}},\\
 d_\ve(t,\mathcal{X},\mathcal{V})&:=&\left(({\mathcal{V}}-\ve \mathcal{U})+\nabla_{{\mathcal{V}}}(\ln f_\ve)\right) \sqrt{f_\ve}. 
 \end{eqnarray*}
 Recall that $\mathcal{X}=(\boldsymbol{r}, \varphi)\in \Omega\times [-\pi,\pi)$ and $\mathcal{V}=(\boldsymbol{v},\omega)\in\mathbb R^2_{\boldsymbol{v}} \times \mathbb R_\omega=\mathbb R^3$.
 
 \begin{proposition}\label{prop:entropy}
 	There exists a constant $C$, independent of $\ve$, such that the following estimate (the entropy inequality) holds:
 	\begin{equation}\label{reflect_entropy}
 	\dfrac{\mathrm{d}}{\mathrm{d}t}\mathcal{E}_{\ve}(t) +\dfrac{1}{2\ve^2}\int_{\mathbb R^3}\int_{\Omega\times (-\pi,\pi)}|d_\ve|^2 \,\mathrm{d}\mathcal{X}\mathrm{d}{\mathcal{V}} 
 	<C.
 	\end{equation}
 \end{proposition}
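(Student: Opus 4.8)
The plan is to establish \eqref{reflect_entropy} by differentiating the free-energy functional $\mathcal{E}_\ve(t)$ along the flow generated by \eqref{fp_with_u_and_diffusion_0_mr}, tracking carefully the boundary terms produced by the collision rule \eqref{bc_fokker-planck_mr}. First I would write $\frac{\mathrm d}{\mathrm dt}\mathcal{E}_\ve = \int\int\bigl(\frac{\v^2}{2}+\frac{\omega^2}{2}+\ln f_\ve + 1\bigr)\partial_t f_\ve$ and substitute $\partial_t f_\ve$ from the equation. The transport terms $\frac{1}{\ve}\mathcal{V}\cdot\nabla_\mathcal{X} f_\ve$ integrate by parts in $\mathcal{X}$; since the weight $\frac{\v^2}{2}+\frac{\omega^2}{2}$ does not depend on $\mathcal{X}$ and $\int(\ln f_\ve+1)\mathcal{V}\cdot\nabla_\mathcal{X} f_\ve = \nabla_\mathcal{X}\cdot\int \mathcal{V} f_\ve\ln f_\ve$ after noting $\mathcal{V}\cdot\nabla_\mathcal{X}(f_\ve\ln f_\ve) = (\ln f_\ve+1)\mathcal{V}\cdot\nabla_\mathcal{X} f_\ve$, these contribute only a boundary flux on $\Gamma$. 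The velocity terms $\frac{1}{\ve^2}\nabla_\mathcal{V}\cdot\bigl((\ve\mathcal{U}-\mathcal{V})f_\ve - \mathcal{D}\nabla_\mathcal{V} f_\ve\bigr)$ integrate by parts in $\mathcal{V}$ with no boundary term (decay at infinity), and after recombining, the combination $(\mathcal{V}-\ve\mathcal{U}) + \nabla_\mathcal{V}\ln f_\ve$ appears naturally, yielding the dissipation $-\frac{1}{\ve^2}\int\int |d_\ve|^2$ together with a cross term $\frac{1}{\ve}\int\int \mathcal{U}\cdot(\mathcal{V}-\ve\mathcal{U})f_\ve$ coming from the $\ve\mathcal{U}$ part paired against the $\v^2/2+\omega^2/2$ weight and the $\ln f_\ve$ weight. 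This cross term is the source of the constant $C$ on the right-hand side: I would bound it by Cauchy--Schwarz as $\frac14\int\int|d_\ve|^2 \cdot \frac{1}{\ve^2}$ absorbed into the dissipation, plus a term controlled by $\|\mathcal{U}\|_{L^\infty}^2 \int\int f_\ve = \|\mathcal{U}\|_\infty^2$ (mass is conserved and equals $1$), which is the desired $\ve$-independent bound. A Gronwall step then closes the estimate.

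The key step — and the main obstacle — is showing that the boundary flux term on $\Gamma$ arising from the transport part has a favorable sign, i.e. is $\le 0$ (or at worst bounded), so it can be dropped from the left-hand side. This boundary term is $-\frac{1}{\ve}\int_\Gamma\int_{\mathbb R^3}\bigl(\frac{\v^2}{2}+\frac{\omega^2}{2}+\ln f_\ve\bigr)(\v\cdot\n) f_\ve\,\mathrm d\mathcal{V}\,\mathrm dS$. Here one uses the collision relation \eqref{collision_relation}: the map $(\v,\omega)\mapsto(\v',\omega')=\mathcal{C}(\v,\omega)$ has $|\det\mathcal{C}|=1$, so it preserves Lebesgue measure on velocity space; moreover the boundary condition \eqref{bc_fokker-planck_mr} says the incoming and outgoing fluxes are matched through this map. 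The natural hypothesis (which the specific $\mathcal{C}$ in Section~\ref{sec:individual-rod} must satisfy, and which I would invoke) is that the collision is energy-non-increasing, $\frac{|\v'|^2+|\omega'|^2}{2}\le \frac{|\v|^2+|\omega|^2}{2}$, equivalently $\mathcal{C}$ is a contraction (or an isometry) for the relevant quadratic form; "elastic collision" suggests it is actually an isometry on this form. Splitting the $\v\cdot\n$ integral into the outgoing part $\v\cdot\n>0$ and incoming part $\v\cdot\n<0$, substituting $(\v,\omega)=\mathcal{C}^{-1}(\v',\omega')$ in the incoming piece, and using the flux-matching \eqref{bc_fokker-planck_mr} together with measure preservation, the two pieces combine. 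The kinetic-energy weight contributes $\le 0$ by the energy inequality for $\mathcal{C}$; the entropy weight $\ln f_\ve$ contributes exactly zero because $f_\ve = f_\ve'$ along the matched pairs in the relevant variables (the collision acts only on velocities, and $\ln$ composed with the measure-preserving map integrates to the same value), so $\int \ln f_\ve\,(\v\cdot\n)f_\ve$ cancels between the two sides. Hence the total $\Gamma$-contribution is $\le 0$ and may be discarded.

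After disposing of the boundary term, what remains is the identity
\begin{equation*}
\frac{\mathrm d}{\mathrm dt}\mathcal{E}_\ve(t) + \frac{1}{\ve^2}\int_{\mathbb R^3}\int_{\Omega\times(-\pi,\pi)}|d_\ve|^2\,\mathrm d\mathcal{X}\,\mathrm d\mathcal{V} \le \frac{1}{\ve}\int\int \mathcal{U}\cdot(\mathcal{V}-\ve\mathcal{U})f_\ve\,\mathrm d\mathcal{X}\,\mathrm d\mathcal{V},
\end{equation*}
and I would estimate the right side by writing $\mathcal{V}-\ve\mathcal{U} = \sqrt{f_\ve}\,(d_\ve/\sqrt{f_\ve}) - \nabla_\mathcal{V}\ln f_\ve\cdot(\text{nothing})$ — more directly, $\mathcal{U}\cdot(\mathcal{V}-\ve\mathcal{U})f_\ve = \mathcal{U}\sqrt{f_\ve}\cdot d_\ve - \mathcal{U}\cdot\nabla_\mathcal{V} f_\ve$, the last term integrating to zero in $\mathcal{V}$. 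Then $\frac1\ve|\int\int \mathcal{U}\sqrt{f_\ve}\cdot d_\ve| \le \frac{1}{2\ve^2}\int\int|d_\ve|^2 + \frac12\int\int|\mathcal{U}|^2 f_\ve \le \frac{1}{2\ve^2}\int\int|d_\ve|^2 + \frac12\|\mathcal{U}\|_{L^\infty}^2$, using conservation of total mass $\int\int f_\ve = 1$. Absorbing the first term on the left yields exactly \eqref{reflect_entropy} with $C = \frac12\|\mathcal{U}\|_{L^\infty(\Omega\times[-\pi,\pi))}^2$, uniform in $\ve$. A minor technical point I would address is the lower boundedness of $\mathcal{E}_\ve$ (the $f_\ve\ln f_\ve$ term is controlled from below on the bounded $\mathcal{X}$-domain by the quadratic Gaussian weight in $\mathcal{V}$, via the standard inequality $f\ln f \ge -C(1+|\mathcal{V}|^2)f - e^{-c|\mathcal{V}|^2}$), which ensures the differential inequality genuinely bounds $\mathcal{E}_\ve(t)$ and the dissipation on finite time intervals.
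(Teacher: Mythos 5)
Your argument follows the paper's proof of Proposition~\ref{prop:entropy} essentially verbatim: multiply \eqref{fp_with_u_and_diffusion_0_mr} by $\frac{\v^2}{2}+\frac{\omega^2}{2}+\ln f_\ve$, cancel the boundary term on $\Gamma$ via the substitution $\mathcal{V}'=\mathcal{C}\mathcal{V}$ together with $|\det\mathcal{C}|=1$, the flux condition \eqref{bc_fokker-planck_mr} and conservation of kinetic energy \eqref{conservation_of_kinetic_energy}, and then absorb the cross term $\frac{1}{\ve}\iint\mathcal{U}\cdot(\mathcal{V}-\ve\,\mathcal{U})f_\ve$ into half of the dissipation by Cauchy--Schwarz after adding the zero-average quantity $\nabla_{\mathcal{V}}\ln f_\ve$. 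The only cosmetic differences are that the paper partitions velocity space by the sign of $\sigma$ from \eqref{def_of_sigma} rather than of $\v\cdot\n$, and asserts exact cancellation of the boundary term where you settle for a sign.
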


\begin{proof}
	Multiplication of \eqref{fp_with_u_and_diffusion_0_mr}  by 
	$$\dfrac{\boldsymbol{v}^2}{2}+\dfrac{\omega^2}{2}+\ln f_\ve $$ 
	and integration with respect to both $\mathcal{X}$ and ${\mathcal{V}}$ gives:
	\begin{eqnarray}
		\nonumber&&\dfrac{\text{d}}{\text{d}t}\left[\int_{\mathbb R^3}\int_{\Omega\times (-\pi,\pi)}\left\{\frac{\boldsymbol{v}^2}{2}+\frac{\omega^2}{2}+\ln f_\ve\right\}f_\ve\,\text{d}\mathcal{X}\text{d}\mathcal{V}\right]=\\ 
		\nonumber&&\hspace{50 pt}=
		-\dfrac{1}{\ve}\int_{\mathbb R^3}\int_{-\pi}^{\pi}\int_{\Gamma}\left\{\frac{\boldsymbol{v}^2}{2}+\frac{\omega^2}{2}+\ln f_\ve \right\}(\boldsymbol{v}\cdot \boldsymbol{n}) f_\ve \,\text{d}s_{\boldsymbol{r}}\,\text{d}\varphi\,\text{d}\mathcal{V} \\
		\nonumber&&\hspace{60 pt}-\dfrac{1}{\ve^2}\int_{\mathbb R^3}\int_{\Omega\times(-\pi,\pi)}
		\left|({\mathcal{V}}-\ve \mathcal{U})+\nabla_{\mathcal{V}}(\ln f_\ve)\right|^2 f_\ve \,\text{d}\mathcal{X}\text{d}\mathcal{V}\\
		&&\hspace{60pt} -\dfrac{1}{\ve}\int_{\mathbb R^3}\int_{\Omega\times(-\pi,\pi)}
		\mathcal{U} \cdot \left(\mathcal{V}-\ve \mathcal{U}\right) f_\ve \,\text{d}\mathcal{X}\text{d}\mathcal{V}.\label{energy_calculation_0}
	\end{eqnarray}

	\noindent Next we compute the boundary term in the right hand side of \eqref{energy_calculation_0} (the one with the integral over $\Gamma$). To this end, we use the boundary condition \eqref{bc_fokker-planck_mr}. For each $\boldsymbol{r}\in \Gamma$ and $-\pi\leq\varphi<\pi$ denote 
	\begin{equation}\label{def_of_S}
	S_{\sigma>0}:=\left\{(\boldsymbol{v},\omega): \sigma>0 \right\}\text{ and }
		S_{\sigma<0}:=\left\{(\boldsymbol{v},\omega): \sigma<0 \right\},
	\end{equation}
	where $\sigma=\sigma(\boldsymbol{r},\boldsymbol{v},\varphi,\omega)$ is given by \eqref{def_of_sigma} in Section \ref{sec:individual-rod}. The introduced sets $S_{\sigma>0}$ and $S_{\sigma<0}$ can be understood as sets of configurations (velocities) of an active rod before and after a collision, respectively, at the given location of the boundary $\boldsymbol{r}\in \Gamma$ and the given orientation $\varphi\in [-\pi,\pi)$. Then the boundary integral can be written as follows for each $\boldsymbol{r}\in \Gamma$ and $-\pi\leq \varphi<\pi$:   
	\begin{eqnarray}
		\nonumber&&\int_{\mathbb R^3}\left\{\frac{\boldsymbol{v}^2}{2}+\frac{\omega^2}{2}+\ln f_\ve \right\}(\boldsymbol{v}\cdot \boldsymbol{n}) f_\ve \,\text{d}\mathcal{V} =\\
		\nonumber&&\hspace{60pt} \int\limits_{S_{\sigma< 0}}\left\{\frac{\boldsymbol{v}^2}{2}+\frac{\omega^2}{2}+\ln f_\ve \right\}(\boldsymbol{v}\cdot \boldsymbol{n}) f_\ve \,\text{d}\mathcal{V}
		\\\nonumber&&\hspace{60pt}
		+\int\limits_{S_{\sigma>0}}\left\{\frac{\boldsymbol{v}^2}{2}+\frac{\omega^2}{2}+\ln f_\ve \right\}(\boldsymbol{v}\cdot \boldsymbol{n}) f_\ve \,\text{d}\mathcal{V}.
	\end{eqnarray}
	We claim that the two integrals in the right hand side of the equality above cancel each other. To verify this, one needs to make the substitution in the first integral $\mathcal{V}'=\mathcal{C}\mathcal{V}$ with $\mathcal{C}$ from \eqref{def_of_C} (or equivalently \eqref{collision_rule_velocities}-\eqref{collision_rule_angular_velocities}) and to use boundary condition \eqref{bc_fokker-planck_mr}, conservation of energy during a collision \eqref{conservation_of_kinetic_energy}, and $\mathrm{d}\mathcal{V}'=\mathrm{d}\mathcal{V}$ which follows from $|\text{det}\,\mathcal{C}|=1$ (see \eqref{collision_relation}). Hence, the boundary term in the right hand side of \eqref{energy_calculation_0} vanishes.
	
	Note that in the same manner one can show that the conservation of total $f_\ve$: 
	\begin{equation*}
	\dfrac{\text{d}}{\text{d}t}\left[\int_{\mathbb R^3}\int_{\Omega\times (-\pi,\pi)}f_\ve\,\text{d}\mathcal{X}\text{d}\mathcal{V}\right]=0.
	\end{equation*}
	Indeed, by using equation \eqref{fp_with_u_and_diffusion_0_mr} and integration by parts it follows that 
	\begin{equation*}
	\dfrac{\text{d}}{\text{d}t}\left[\int_{\mathbb R^3}\int_{\Omega\times (-\pi,\pi)}f_\ve\,\text{d}\mathcal{X}\text{d}\mathcal{V}\right]=		-\dfrac{1}{\ve}\int_{\mathbb R^3}\int_{-\pi}^{\pi}\int_{\Gamma}(\boldsymbol{v}\cdot \boldsymbol{n}) f_\ve \,\text{d}s_{\boldsymbol{r}}\,\text{d}\varphi\,\text{d}\mathcal{V},
	\end{equation*}
	and one can show that the right hand side vanishes following the same arguments as for the boundary term in \eqref{energy_calculation_0}
	
	Finally, the last term in the right hand side of \eqref{energy_calculation_0} is estimated as follows: 
	\begin{eqnarray}
	&&-\dfrac{1}{\ve}\int_{\mathbb R^3}\int_{\Omega\times(-\pi,\pi)}
	\mathcal{U} \cdot \left(\mathcal{V}-\ve \mathcal{U}\right) f_\ve \,\text{d}\mathcal{X}\text{d}\mathcal{V}\nonumber\\
	&&\hspace{60pt}=-\dfrac{1}{\ve}\int_{\mathbb R^3}\int_{\Omega\times(-\pi,\pi)}
	\mathcal{U} \cdot \left((\mathcal{V}-\ve \mathcal{U})+\nabla_{\mathcal{V}}(\ln f_\ve)\right) f_\ve \,\text{d}\mathcal{X}\text{d}\mathcal{V}\nonumber\\
	&&\hspace{60pt}\leq C +\dfrac{1}{2\ve^2}\int_{\mathbb R^3}\int_{\Omega\times (-\pi,\pi)}|d_\ve|^2 \,\mathrm{d}\mathcal{X}\mathrm{d}{\mathcal{V}}. \label{prop2_last_ineq}
	\end{eqnarray} 
	Thus, we obtained \eqref{reflect_entropy} and the proposition is proved. 
\end{proof}

\bigskip 

\noindent In the standard manner (see, e.g., \cite{GouJabVas2004}) the entropy estimate \eqref{reflect_entropy}  implies the following bounds: 
\begin{eqnarray*}
f_\ve \left(1+\frac{ \boldsymbol{v}^2}{2}+\frac{\omega^2}{2}+|\ln f_\ve|\right) &\text{ is bounded in }&L^{\infty} (0,T;L^1(\Omega\times(-\pi,\pi)\times \mathbb R^3)),\\
\ve^{-1}d_\ve &\text{ is bounded in }&L^{2} (0,T;L^2(\Omega\times(-\pi,\pi)\times \mathbb R^3)),\\
\rho_\ve  &\text{ is bounded in }&L^{\infty} (0,T;L^1(\Omega\times(-\pi,\pi))),\\
J_\ve \text{ and }J_\ve-\rho_\ve \mathcal{U} &\text{ are bounded in }&L^{2} (0,T;L^1(\Omega\times(-\pi,\pi))).
\end{eqnarray*}
The proof of the following proposition is also standard and can be found in \cite{GouJabVas2004}.
\begin{proposition}\label{prop2}
	There exist such $\rho$ and $J$ that the following convergences hold as $\ve\to 0$ in the distributional sense: 
	\begin{eqnarray}
	&& \rho_\ve \rightharpoonup \rho,\,\,\, J_\ve \rightharpoonup J,\,\,\,\mathbb P_\ve \rightharpoonup \rho \mathbb I.
	\end{eqnarray}
	Here $\mathbb I$ is identity matrix.
\end{proposition}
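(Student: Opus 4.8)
\noindent\emph{Proof proposal.} I would follow the scheme of \cite{GouJabVas2004}, using only the four \emph{a priori} bounds displayed above. For the first two convergences soft compactness suffices: the bound on $f_\ve\bigl(1+\tfrac{\v^2}{2}+\tfrac{\omega^2}{2}+|\ln f_\ve|\bigr)$ in $L^\infty(0,T;L^1)$ yields, via the de la Vall\'ee-Poussin criterion and the Dunford--Pettis theorem, equi-integrability of $\{f_\ve\}$ in all variables (the $L\log L$ part ruling out concentration, the quadratic moments ruling out escape to $|\mathcal{V}|=\infty$, while $\Omega$ is bounded and $\varphi$ lives on a torus), hence relative weak-$L^1$ compactness of $\{f_\ve\}$ and therefore of $\{\rho_\ve\}$; along a subsequence $\rho_\ve\rightharpoonup\rho$. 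Since $\{J_\ve\}$ is bounded in $L^2(0,T;L^1)$, a further subsequence converges to some $J$ in the sense of bounded measures in $\mathcal{X}$ and $L^2$ in $t$, which is the meaning of ``distributional'' here.

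The substantive claim is $\mathbb P_\ve\rightharpoonup\rho\,\mathbb I$, and the lever is the dissipation $d_\ve$. Writing $d_\ve=(\mathcal{V}-\ve\mathcal{U})\sqrt{f_\ve}+2\nabla_{\mathcal{V}}\sqrt{f_\ve}$ and $\mathcal{V}\otimes\mathcal{V} f_\ve=\mathcal{V}\otimes(\mathcal{V}-\ve\mathcal{U})f_\ve+\ve\,\mathcal{V}\otimes\mathcal{U} f_\ve$ pointwise in $(t,\mathcal{X})$, then substituting $(\mathcal{V}-\ve\mathcal{U})\sqrt{f_\ve}=d_\ve-2\nabla_{\mathcal{V}}\sqrt{f_\ve}$ in the first term and integrating by parts in $\mathcal{V}$, using
\begin{equation*}
-2\int_{\mathbb R^3}\mathcal{V}\otimes\nabla_{\mathcal{V}}\sqrt{f_\ve}\,\sqrt{f_\ve}\,\mathrm{d}\mathcal{V}=-\int_{\mathbb R^3}\mathcal{V}\otimes\nabla_{\mathcal{V}}f_\ve\,\mathrm{d}\mathcal{V}=\rho_\ve\,\mathbb I
\end{equation*}
(the boundary terms at $|\mathcal{V}|\to\infty$ vanishing by the moment and entropy bounds, which also place $\nabla_{\mathcal{V}}\sqrt{f_\ve}$ in $L^2$), one arrives at the exact identity
\begin{equation*}
\mathbb P_\ve=\rho_\ve\,\mathbb I+\int_{\mathbb R^3}\mathcal{V}\otimes d_\ve\,\sqrt{f_\ve}\,\mathrm{d}\mathcal{V}+\ve^2\,J_\ve\otimes\mathcal{U}.
\end{equation*}
The last term tends to $0$ in $L^2(0,T;L^1)$ since $J_\ve$ is bounded there and $\mathcal{U}$ is smooth and bounded. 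For the middle term, Cauchy--Schwarz in $\mathcal{V}$ gives the pointwise bound $\bigl(\int|\mathcal{V}|^2f_\ve\bigr)^{1/2}\bigl(\int|d_\ve|^2\bigr)^{1/2}$, and a further Cauchy--Schwarz in $(t,\mathcal{X})$ — with the first factor bounded in $L^2$ by the moment estimate and the second equal to $\ve$ times the function $\ve^{-1}\bigl(\int|d_\ve|^2\bigr)^{1/2}$, bounded in $L^2$ by the entropy estimate — shows it is $O(\ve)$ in $L^1_{t,\mathcal{X}}$. Hence $\mathbb P_\ve-\rho_\ve\mathbb I\to0$ strongly in $L^1_{t,\mathcal{X}}$, and combined with $\rho_\ve\rightharpoonup\rho$ this gives $\mathbb P_\ve\rightharpoonup\rho\,\mathbb I$. (The identical substitution applied to $J_\ve$ itself gives $J_\ve-\rho_\ve\mathcal{U}=\ve^{-1}\int_{\mathbb R^3} d_\ve\sqrt{f_\ve}\,\mathrm{d}\mathcal{V}$, which reconfirms its listed bound and supplies the decomposition used afterwards to identify $J$.)

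The main obstacle is not the algebra but the soft part: making the Dunford--Pettis extraction of $\rho$ rigorous so that \emph{no mass is lost in the limit}, and justifying the velocity integrations by parts together with the vanishing of the boundary terms at $|\mathcal{V}|=\infty$ — all from the same moment and entropy bounds. Since this is carried out in detail in \cite{GouJabVas2004}, I would only indicate the adaptations needed here (a bounded, non-periodic domain and the extra $\varphi,\omega$ variables), which change nothing essential, the collision boundary contribution having already been shown to drop out of the energy balance in Proposition~\ref{prop:entropy}.
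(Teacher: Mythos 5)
Your proposal is correct and follows exactly the route the paper intends: the paper gives no proof of Proposition~\ref{prop2} at all, deferring to \cite{GouJabVas2004} as ``standard,'' and your argument (Dunford--Pettis from the $L\log L$ and moment bounds for $\rho_\ve$ and $J_\ve$, plus the identity $\mathbb P_\ve=\rho_\ve\mathbb I+\int\mathcal{V}\otimes d_\ve\sqrt{f_\ve}\,\mathrm{d}\mathcal{V}+\ve^2 J_\ve\otimes\mathcal{U}$ with the dissipation term $O(\ve)$ in $L^1$) is precisely that standard scheme adapted to the extra $(\varphi,\omega)$ variables. The algebraic identities check out, so this is a faithful and correct filling-in of the omitted proof.
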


\noindent Now we are in position to complete the proof of Theorem \ref{thm:main_mr}.

\begin{proof}[Back to proof of Theorem \ref{thm:main_mr}]
By multiplication of \eqref{fp_with_u_and_diffusion_0_mr} 
by a scalar test functions $\psi(t,\mathcal{X},{\mathcal{V}})$ with a finite support in $0<t<T$, $\mathcal{X}\in \Omega\times (-\pi,\pi)$ and ${\mathcal{V}}\in\mathbb R^3$, integration with respect to $t$, $\mathcal{X}$ and $\mathcal{V}$, as well as integration by parts, one obtains the following equality: 
\begin{eqnarray}
	&&\int_0^T\int_{\mathbb R^3}\int_{\Omega\times(-\pi,\pi)} f_\ve \, \left\{\partial_t\psi + \dfrac{1}{\ve}{\mathcal{V}}\cdot \nabla_{\mathcal{X}} \psi +\dfrac{1}{\ve^2}(\varepsilon \mathcal{U}-\mathcal{V})\cdot \nabla_{\mathcal{V}} \psi + \dfrac{1}{\ve^2}\Delta_{\mathcal{V}} \psi \right\}\,\text{d}\mathcal{X}\text{d}\mathcal{V}\text{d}t \nonumber\\
	&& \hspace{120 pt}-\dfrac{1}{\ve}\int_{\Gamma}\int_{-\pi}^{\pi} \int_{S_{\sigma >0}}(\boldsymbol{v}\cdot \boldsymbol{n})f_\ve (\psi-\psi')\, \text{d}{\mathcal{V}} \text{d}\varphi\text{d}s_{\boldsymbol{r}}\text{d}t=0,\label{weak_solution_definition}
\end{eqnarray} 
where $\psi'=\psi(t,\mathcal{X},\mathcal{V}')$  with $\mathcal{V}'=\mathcal{C}\mathcal{V}$, and matrix $\mathcal{C}$ and set $S_{\sigma>0}$ are defined by \eqref{def_of_C} and \eqref{def_of_S}, respectively. Equality \eqref{weak_solution_definition} can be understood as the weak formulation of the problem \eqref{fp_with_u_and_diffusion_0_mr} with boundary condition \eqref{bc_fokker-planck_mr}. 

Next, take a test function in \eqref{weak_solution_definition} which is independent of $\mathcal{V}$: $\psi:=g(t,\mathcal{X})$ (for the sake of clarity, choose different symbol for the test function here: $g$ instead of $\psi$). Formally, such a test function is not admissible since it does not have a finite support in $\mathcal{V}$. On the other hand, one can use truncations $g(t,\mathcal{X})\chi_{|\mathcal{V}|<R}(\mathcal{V})$ as test functions and pass to the limit $R\to \infty$ to obtain \eqref{weak_solution_definition} for $\psi=g(t,\mathcal{X})$.
By integrating in ${\mathcal{V}}$ we obtain: 
\begin{equation*}
\int_{0}^T\int_{\Omega\times(-\pi,\pi)} \rho_\ve  \partial_t g + J_\ve \cdot \nabla_{\mathcal{X}} g \,\text{d}\mathcal{X}\text{d}t=0.
\end{equation*}
Note that the boundary term in \eqref{weak_solution_definition} vanishes for test functions independent from $\mathcal{V}$. By passing to the limit $\ve \to 0$ we obtain: 
\begin{equation}\label{eq_for_weak_rho}
\int_{0}^T\int_{\Omega\times(-\pi,\pi)} \rho  \partial_t g + J \cdot \nabla_{\mathcal{X}} g \,\text{d}\mathcal{X}\text{d}t=0.
\end{equation}

\noindent Take the test function of the form $\psi:=\ve {v}_i h_i(t,\mathcal{X})$ for $i=1,2$:
\begin{eqnarray}\label{prelim_weak_def_for_J}
&&\int_{0}^T\int_{\Omega\times(-\pi,\pi)} \ve^2 J_\ve^i \partial_t h_i + \mathbb P_\ve^{ij} \partial_{\mathcal{X}_j}h_i + (\rho_\ve \mathcal{U}^i - J_\ve^i)h_i \,\text{d}\mathcal{X}\text{d}t \nonumber\\
&&\hspace{50pt} -2\int_{0}^{T}\int_{\mathbb{R}}\int_{-\pi}^{\pi}\int_{\Gamma}\int_{S_{\sigma>0}} \sigma(\boldsymbol{v}\cdot \boldsymbol{n})f_\ve h_i n_i \,\text{d}v\text{d}s_x\text{d}\varphi\text{d}\omega \text{d}t=0.  
\end{eqnarray}
Here both the super- and sub-index $i$ stand for the coordinate number. Consider $\boldsymbol{h}=\left\{h_i\right\}_{i=1}^{3}$ so that $\boldsymbol{h}\cdot\boldsymbol{n}=h_1n_1+h_2n_2=0$ for all $-\pi\leq \varphi < \pi$ and $h_3(t,\mathcal{X})$ is arbitrary. The test function $\psi$ corresponding to $h_3(t,\mathcal{X})$ is $\psi=\omega h_3(t,\mathcal{X})$, and in this case equality \eqref{prelim_weak_def_for_J} holds for $i=3$ with no second (boundary) term. Then taking the sum with respect to $i$ in \eqref{prelim_weak_def_for_J} leads to that the boundary term in \eqref{prelim_weak_def_for_J} vanishes, so that the following equality holds: 
\begin{equation*}
\int_{0}^T\int_{\Omega\times(-\pi,\pi)} \ve^2 J_\ve \cdot \partial_t\boldsymbol{h} + \mathbb P_\ve : \nabla_{\mathcal{X}} \boldsymbol{h}+(\rho_\ve \mathcal{U} - J_\ve)\cdot \boldsymbol{h} \,\text{d}\mathcal{X}\text{d}t=0. 
\end{equation*} 
Passing to the limit $\ve \to 0$ and using Proposition \ref{prop2} we get:
\begin{equation}\label{eq_for_J}
\int_{0}^T\int_{\Omega\times(-\pi,\pi)} \rho\mathbb I : \nabla_{\mathcal{X}} \boldsymbol{h}+(\rho \mathcal{U} - J)\cdot \boldsymbol{h} \,\text{d}\mathcal{X}\text{d}t=0. 
\end{equation}  
This equality gives us the following relation for $J$: 
\begin{equation}
\label{weak_formula_for_J}
\int_{0}^T\int_{\Omega\times(-\pi,\pi)} J\cdot \boldsymbol h \,\text{d}\mathcal{X}\text{d}t= 
\int_{0}^T\int_{\Omega\times(-\pi,\pi)} \rho\mathbb I : \nabla_{\mathcal{X}} \boldsymbol{h}+\rho \mathcal{U}\cdot \boldsymbol{h} \,\text{d}\mathcal{X}\text{d}t
\end{equation}
 for all $\boldsymbol{h}$ such that $h_1n_1+h_2n_2|_{\Gamma}=0$ for all $-\pi\leq \varphi < \pi$.
Finally, take any $g(t,\mathcal{X})$ in \eqref{eq_for_weak_rho} such that $\left.\dfrac{\partial g}{\partial \boldsymbol{n}}\right|_{\Gamma}=0$ for all $-\pi\leq \varphi < \pi$ and $\boldsymbol{h}:=\nabla_{\mathcal{X}} g$ in \eqref{weak_formula_for_J} to express $\int_0^T\int_{\Omega\times(-\pi,\pi)} J\cdot \nabla_{\mathcal{X}} g \,\text{d}\mathcal{X}\text{d}t$ in \eqref{eq_for_weak_rho}. We obtain the following weak formulation for the equation for $\rho$: 
\begin{equation}\label{eq_for_rho}
\int_{0}^T\int_{\Omega\times(-\pi,\pi)} \left\{\rho  \partial_t g +\rho \mathcal{U}\cdot \nabla_{\mathcal{X}}g+ \rho\mathbb I : \nabla_{\mathcal{X}}^2 g \right\}\,\text{d}\mathcal{X}\text{d}t=0 
\end{equation}
for all $g$  such that $\left.\dfrac{\partial g}{\partial\boldsymbol{n}}\right|_{\Gamma}=0$ for all $-\pi\leq \varphi < \pi$.
By integrating by parts and using condition $\left.\dfrac{\partial g}{\partial \boldsymbol{n}}\right|_{\Gamma}=0$, \eqref{eq_for_rho} leads to 
\begin{eqnarray}\nonumber 
&&\int_{0}^T\int_{\Omega\times(-\pi,\pi)} g\left\{ \partial_t \rho +\nabla_{\mathcal{X}}\cdot (\rho \mathcal{U})-\Delta_{\mathcal{X}} \rho\right\} \,\text{d}\mathcal{X}\text{d}t+
\\&&\hspace{60 pt} +\int_{0}^{T}\int_{-\pi}^{\pi}\int_{\Gamma}g\left\{-\dfrac{\partial \rho}{\partial \boldsymbol{n}}+(\boldsymbol{u}\cdot \boldsymbol{n})\rho\right\}\,\text{d}s_{\boldsymbol{ r}}\text{d}\varphi\text{d}t =0. \label{eq_for_rho_strong}
\end{eqnarray}
Varying $g$ on $\Gamma$ we get our final result which is the no-flux boundary condition \eqref{no_flux_bc}:
\begin{equation*}
\dfrac{\partial \rho}{\partial \boldsymbol{n}}=(\boldsymbol{u}\cdot \boldsymbol{n})\rho
\end{equation*} 
 for all $-\pi\leq \varphi < \pi$.
 
Thus, the proof of Theorem \ref{thm:main_mr} is complete.
\end{proof}

\section{Boundary layer equation at wall in vanishing translational diffusion limit: derivation of \eqref{eqn:no-diff}}
\label{sec:nodiffusion}

To describe the behavior of $\rho$ near the wall, one needs to consider the limiting behavior of $\rho$ inside the ``band" $\Omega^*:=\left\{\boldsymbol{r}\in\Omega: \text{dist}(\boldsymbol{r}, \Gamma)<c\right\}$ where $c$ is small but independent of $\delta$. Here we choose $c>0$ such that $c<\kappa_{\text{max}}^{-1}$ where $\kappa_{\text{max}}$ is the maximum curvature along $\Gamma$. If $\Gamma$ is a straight line or a segment, then curvature $\kappa$ is zero, and $c$ is an arbitrary number independent of $\delta$.   

We  introduce new coordinate system in band $\Omega^{*}$, related to parametrization of wall $\Gamma=\partial \Omega$. Namely, let $\Gamma=\{\boldsymbol{\gamma}(s)\colon 0\leq s\leq L\}$ be the natural parametrization of the wall $\Gamma$ (in other words, $s$ is the arc length parameter) and $\boldsymbol{n}(s)$, $\boldsymbol{\tau}(s)$ be an outward normal and tangential vectors at $\boldsymbol{\gamma}(s)\in \Gamma$, respectively. For every $\boldsymbol{r}\in\Omega^{*}$, we define 
\begin{equation}\label{change_of_variables}
\boldsymbol{r}=\boldsymbol{r}(r,s)=\boldsymbol{\gamma}(s)-r\boldsymbol{n}(s),
\end{equation}
 where $r=\text{dist}(\boldsymbol{r},\Gamma)$ and $\boldsymbol{\gamma}(s)$  is the ``projection" of $\boldsymbol{r}$ onto $\Gamma$. 
 
 In this new coordinate system we introduce the two-scale ansatz for unknown function $\rho$:
 \begin{equation}\label{twoscale}
 \rho=\rho_{w}+\rho_{b}= \sum\limits_{k=-1}^{\infty} \delta^{k} \rho_{w}^{(k)}(t,\delta^{-1}r,s,\varphi)+\sum\limits_{k=0}^{\infty}\delta^{k}\rho_b^{(k)}(t,r,s,\varphi).
 \end{equation}
Here sub-indexes '$w$' and '$b$' stand for ``wall" and ``bulk", respectively. Variable $z$ denotes below the second argument of $\rho_{w}^{(k)}$, the $k$th wall (boundary layer) coefficient, {\it i.e.,} $z=\delta^{-1}r$ in \eqref{twoscale}. We assume that functions $\{\rho_{w}^{(k)}(t,z,s,\varphi)\}_{k}$ vanish with all derivatives in $z$ as $z\to \infty$.  
In what follows, we focus on the first three terms of two-scale expansion \eqref{twoscale} or, in other words, on terms of order $\delta^{-1}$ and $\delta^{0}$:
\begin{equation}\label{twoscale0}
\rho= \delta^{-1} \rho_{w}^{(-1)}(t,\delta^{-1}r,s,\varphi)+ \rho_{w}^{(0)}(t,\delta^{-1}r,s,\varphi)+\rho_b^{(0)}(t,r,s,\varphi)+O(\delta).
\end{equation}
The representations \eqref{twoscale0} and \eqref{limiting_representation} are related via the following equalities: 
\begin{equation}\label{def_of_rho_and_psi}
\psi_{\text{wall}}(t,\boldsymbol{ r},\varphi)=\int_0^{+\infty}\rho_{w}^{(-1)}(t,z,s,\varphi)\, \text{d}z\,\text{ and }\,\rho_{\text{bulk}}(t,\boldsymbol{ r},\varphi)=\rho_{b}^{(0)}(t,r,s,\varphi).
\end{equation}

Next, we rewrite the Fokker-Planck equation \eqref{FP_no_diff} in the coordinate system $(r,s)$. To this end, we introduce the inverse substitution functions $R(\boldsymbol{r})$ and $S(\boldsymbol{r})$: 
\begin{equation*}
\begin{array}{c}r=R(\boldsymbol{r})\\
s=S(\boldsymbol{r})
\end{array}
~~\Leftrightarrow~~
\begin{array}{c}
R(\boldsymbol{\gamma}(s)-r\boldsymbol{n}(s))=r,\\
S(\boldsymbol{\gamma}(s)-r\boldsymbol{n}(s))=s.
\end{array}
\end{equation*} 
Using chain rule and the 2D Frenet-Serret relation for the normal vector $\boldsymbol{n}'(s)=-\kappa(s) \boldsymbol{\tau}(s)$ where $\kappa(s)$ is the curvature of $\Gamma$ at $\boldsymbol{r}=\boldsymbol{\gamma}(s)$, one obtains
\begin{equation}\label{jacobian}
\nabla_{\boldsymbol{r}}R=-\boldsymbol{n}~~\text{ and }~~\nabla_{\boldsymbol{r}}S=(1-\kappa r)^{-1}\boldsymbol{\tau}.
\end{equation} 
To compute $\nabla_{\boldsymbol{r}}\rho$, $\Delta_{\boldsymbol{r}}\rho$ and $\nabla_{\boldsymbol{r}}\cdot \boldsymbol{u}$ one can use \eqref{jacobian} and both 2D Frenet-Serret relations, $\boldsymbol{n}'(s)=-\kappa(s) \boldsymbol{\tau}(s)$ and $\boldsymbol{\tau}'(s)=\kappa(s) \boldsymbol{n}(s)$:
\begin{eqnarray*}
\nabla_{\boldsymbol{r}} \rho&=& -\partial_r\rho \, \boldsymbol{n}+\frac{\partial_s \rho}{1-\kappa r} \,\boldsymbol{\tau}, \\
\Delta_{\boldsymbol{r}} \rho&=&\partial^2_r \rho- \frac{\kappa \partial_r \rho}{1-\kappa r}+\frac{r(\partial_s \kappa)(\partial_s \rho)}{(1-\kappa r)^3}+\frac{\partial^2_s \rho}{(1-\kappa r)^2}.\\
\nabla_{\boldsymbol{r}}\cdot u &=& -\partial_r u_{\text{n}}+\dfrac{\partial_s u_{\tau}}{1-\kappa r}+\dfrac{\kappa u_{\text{n}}}{1-\kappa r}.
\end{eqnarray*}
Here $u_{\text{n}}=\boldsymbol{u}\cdot \boldsymbol{n}$ and $u_{\tau}=\boldsymbol{u}\cdot \boldsymbol{\tau}$.

Then the original problem (\ref{FP_no_diff})-(\ref{no_flux_bc_b_layer}) converts into
\begin{eqnarray}\nonumber
	&&\partial_t \rho- u_{\text n}\partial_r \rho+\frac{u_{\tau}\partial_s\rho}{1-\kappa r} +\left(-\partial_r u_{\text{n}}+\dfrac{\partial_s u_{\tau}}{1-\kappa r}+\dfrac{\kappa u_{\text{n}}}{1-\kappa r}\right)\rho+\partial_{\varphi}(T\rho)\\
	&&\hspace{80pt}=\delta\left(\partial^2_r \rho- \frac{\kappa \partial_r \rho}{1-\kappa r}+\frac{r(\partial_s \kappa)(\partial_s\rho)}{(1-\kappa r)^3}+\frac{\partial^2_s \rho}{(1-\kappa r)^2}   \right)+D_{\text{rot}}\partial^2_{\varphi}\rho\label{FP_no_diff_b_layer}
\end{eqnarray}
with boundary conditions
\begin{equation}\label{no_flux_bc_b_layer_2}
	\delta\dfrac{\partial \rho}{\partial r}=-u_{\text{n}}\rho \,\text{ if } r=0.
\end{equation}

When substituting representation \eqref{twoscale0} into equation \eqref{FP_no_diff_b_layer}, we will treat the second term in the left hand side as follows: 
\begin{equation*}
u_{\text{n}}\partial_r\rho=\left(u_{\text{n}}^{(0)}+(z\delta) \, u_{\text{n}}^{(1)} +\frac{1}{2}(z\delta )^2u_{\text{n}}^{(2)}+...\right)\partial_r\rho_{w}+u_{\text{n}}(r,s)\,\partial_r \rho_{b},
\end{equation*}    
where $u_{\text{n}}^{(k)}=1/k!\, \partial^{k}_{r}u_{\text{n}}|_{r=0}$. Note that $u_{\text{n}}^{(k)}$ is a function of $\varphi$ and $s$ for each $k=1,2,...$. 

At the order $\delta^{-2}$ in \eqref{FP_no_diff_b_layer} one has the following equality: 
\begin{equation}\label{delta_minus2_pde}
u_{\text{n}}^{(0)}\partial_z\rho_{w}^{(-1)}+\partial^2_{z} \rho_{w}^{(-1)}=0.
\end{equation}
The lowest order in the boundary conditions \eqref{no_flux_bc_b_layer} is $\delta^{-1}$ and the corresponding equality is
\begin{equation}\label{delta_minus1_bc}
u_{\text{n}}^{(0)}\rho_{w}^{(-1)}+\partial_{z} \rho_{w}^{(-1)}=0, ~~z=0.
\end{equation}
Combining \eqref{delta_minus2_pde} and \eqref{delta_minus1_bc} we obtain a formula for $\rho_{w}^{(-1)}$:
\begin{equation*}
\rho_{w}^{(-1)}(t,z,s,\varphi)=\left\{\begin{array}{rl}B(t,s,\varphi)e^{-u_{\text{n}}^{(0)}z},&u_{\text{n}}^{(0)}<0,\\0, & u_{\text{n}}^{(0)}\geq 0.\end{array}\right.
\end{equation*}
At order $\delta^{-1}$, the Fokker-Planck equation \eqref{FP_no_diff_b_layer} has the form 
\begin{eqnarray}
u_{\text{n}}^{(0)}\partial_z\rho_{w}^{(0)}+\partial^2_{z} \rho_{w}^{(0)}& =&
\partial_t \left(Be^{-u_{\text{n}}^{(0)}z}\right) -u_{\text{n}}^{(0)}Be^{-u_{\text{n}}^{(0)}z}\left(-z u_{\text{n}}^{(1)}+\kappa \right)+u_{\tau}^{(0)}\partial_s\left( Be^{-u_{\text{n}}^{(0)}z} \right)\nonumber  \\
&&-\left(u_{\text{n}}^{(1)}-\partial_s u_{\tau}^{(0)}-\kappa u_{\text{n}}^{(0)}\right)Be^{-u_{\text{n}}^{(0)}z}\nonumber\\&&+\partial_{\varphi} \left( TBe^{-u_{\text{n}}^{(0)}z} \right) -D_{\text{rot}}\partial^2_{\varphi}  \left( Be^{-u_{\text{n}}^{(0)}z} \right). \label{eq_for_Beuz}
\end{eqnarray}
Boundary conditions at the order $\delta^{0}$ looks as follows: 
\begin{equation*}
u_{\text{n}}^{(0)} \rho_{w}^{(0)}+\partial_{z} \rho_{w}^{(0)}=-u_{\text{n}}^{(0)}\rho_{b}^{(0)}   \,\text{ for } z=0.
\end{equation*}
Consider $u_{\text{n}}^{(0)}>0$. After integration \eqref{eq_for_Beuz} with respect to $z$ from $0$ to $\infty$ and simplifications one obtains an equation for $\psi_{\text{wall}}=\int_{0}^{\infty}\rho_{w}^{(-1)}\,\text{d}z=B/u_{\text{n}}^{(0)}$:
\begin{equation}
u_{\text{n}}^{(0)}\rho_{b}^{(0)}=\partial_{t}\psi_{\text{wall}}+\partial_s(u_{\tau}^{(0)} \psi_{\text{wall}})+\partial_\varphi\left(T \psi_{\text{wall}}\right)- D_{\text{rot}}\partial^2_\varphi \psi_{\text{wall}}.\label{main_eq_no_diff}
\end{equation}
Equation \eqref{main_eq_no_diff} is a conservation law for the distribution of active rods $\psi_{\text{wall}}$ accumulated at wall; these active rods re-orient and move along the wall in the tangential direction. Term $u_{\text{n}}^{(0)}\rho_{b}^{(0)}$ in the left hind side of \eqref{main_eq_no_diff} accounts for particles coming from bulk. If divergence-free condition is imposed, $\nabla_{\bf r}\cdot u =0$, which is for $r=0$ has the form $u_{\text{n}}^{(1)}-\partial_s u_{\tau}^{(0)}-\kappa u_{\text{n}}^{(0)}=0$, then \eqref{main_eq_no_diff} can be rewritten in the form:       
\begin{equation}
u_{\text{n}}^{(0)}\rho_{b}^{(0)}=\partial_{t}\psi_{\text{wall}}+(u_{\text{n}}^{(1)}-\kappa u_{\text{n}}^{(0)})\psi_{\text{wall}}+u_{\tau}^{(0)}\partial_s \psi_{\text{wall}}+\partial_\varphi
\left( T \psi_{\text{wall}}\right)- D_{\text{rot}}\partial^2_\varphi \psi_{\text{wall}}.\nonumber
\end{equation}

If $u_{\text{n}}^{(0)}\leq 0$, then $\psi_{\text{wall}}=0$ and $\rho_b^{(0)}|_{r=0}=0$.

Next we obtain equation for $\rho_{\text{bulk}}=\rho^{(0)}_b$. To this end, consider equation \eqref{FP_no_diff_b_layer} at the order $\delta^{0}$ and pass to the limit $z\to \infty$. After we rewrite the resulting equation in the original coordinate system we recover the Fokker-Planck equation for $\rho_{\text{bulk}}$  
\begin{equation}\label{rho_bulk}
\partial_t \rho_{\text{bulk}}+\nabla_{\boldsymbol{r}}\cdot (\boldsymbol{u}\rho_{\text{bulk}})+\partial_{\varphi}(T\rho_{\text{bulk}})=D_{\text{rot}}\partial^2_{\varphi}\rho_{\text{bulk}}.
\end{equation}

Representation \eqref{limiting_representation} with $\psi_{\text{wall}}$ and $\rho_{\text{bulk}}$ satisfying equations \eqref{main_eq_no_diff} and \eqref{rho_bulk} respectively, is valid if $\boldsymbol{u}\cdot \boldsymbol{n}|_{\Gamma}>0$ for all $-\pi \leq \varphi <  \pi$. In this case, active rods accumulate at the wall, but they cannot leave the wall. On the other hand, due to the rotational diffusion, active rods in experiments (for example those described by system    
\eqref{force_balance}-\eqref{torque_balance}) may leave the wall and be re-injected into the bulk. In this case, one needs an additional boundary layer term in the representation \eqref{twoscale} (with an additional scale different from $1$ and $\delta$) which will capture active rods at wall $\Gamma$ with $\boldsymbol{u}\cdot \boldsymbol{n}\approx 0$. It is similar to ``parabolic boundary layers" in vanishing diffusion limit in elliptic equations; these boundary layers introduce two new scales $\delta^{1/3}$ and $\delta^{2/3}$, and these terms are constructed at boundary points where characteristics of the limiting hyperbolic equation are tangential to the boundary, see Sec.~2.7.5 in \cite{Hol2012}, see also \cite{van1976}. In this work, the formula for $\chi$ in \eqref{eqn:no-diff}, the flux of re-injected active rods, {\it i.e.,}, with $\boldsymbol{r}\in\Gamma$, $\boldsymbol{u}\cdot \boldsymbol{n}|_{\Gamma}\approx 0$ and $\dfrac{\text{d}}{\text{d}t}\boldsymbol{u}\cdot \boldsymbol{n}|_{\Gamma}<0$, is derived from the conservation of total density:
\begin{equation*}
\dfrac{\text{d}}{\text{d}t}\left[\int\limits_{-\pi}^{\pi}\int\limits_{\Omega}\,\rho_{\text{bulk}}\,\text{d}{\boldsymbol{r}}\text{d}\varphi+\int\limits_{\Gamma}\int\limits_{\varphi_1}^{\varphi_2}\psi_{\text{wall}}\,\text{d}\varphi\text{d}s\right]=0.
\end{equation*}  
Adding terms with flux $\chi$ to the right hand side of equation \eqref{rho_bulk} we derive system \eqref{eqn:no-diff}.

\bigskip

\section{Numerical example}
\label{sec:numerics}

\begin{figure}[t]
	\begin{center}
		\includegraphics[width=0.4\textwidth]{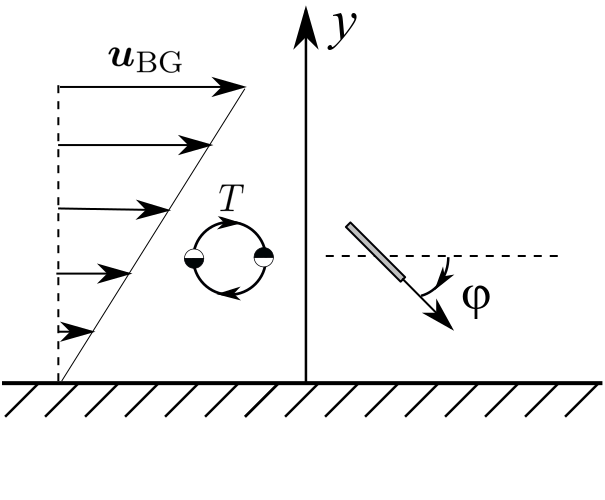}
		\includegraphics[width=0.5\textwidth]{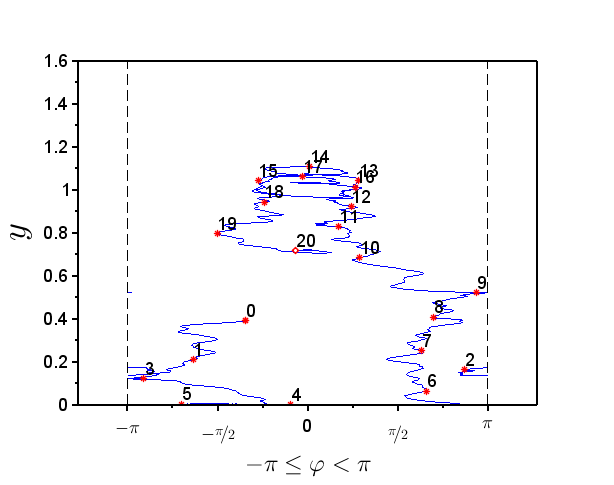}
		\caption{Left: sketch of an active rod with orientation $\varphi$ in $xy$-plane; $x$-axis represents the wall; straight arrows illustrate background shear flow $\boldsymbol{u}_{\text{BG}}$; circle shows how the background torque $T(\varphi)$ acts on the active rod; arrows along the circle show that equation $\dot{\varphi}=T(\varphi)$ has two semi-stable states $\varphi=0$ and $\varphi=-\pi$.
			Right: a sample trajectory in $\varphi y$-plane for $0\leq t \leq 20$. Red dots with numbers along trajectories indicate trajectory points at corresponding integer time moments. 
			\label{fig:sketch}
		}
	\end{center}
\end{figure}	

In this section, we provide a numerical example to illustrate the relation between a specific individual based model for an active rod and corresponding kinetic approaches discussed above. 

We assume that the wall $\Gamma$ coincides with $x$-axis, {\it i.e.} $\Gamma=\left\{(x,y):y=0\right\}$, and the active rod's probability distribution function does not depend on $x$ (see Fig.~\ref{fig:sketch}, left). Drag $\boldsymbol{u}$ exerted on an active rod is the sum of two components: a background shear flow $\boldsymbol{u}_{\text{BG}}=(\dot\gamma y, 0)$ and self-propulsion $v_{\text{prop}}(\cos(\varphi),\sin\varphi)$. The vertical component $u$ of drag velocity $\boldsymbol{u}$ and torque $T$ exerted on an active rod,  are defined as follows: 
\begin{equation*}
u(y,\varphi)=u(\varphi)= v_{\text{prop}}\sin \varphi, \quad T(y,\varphi)=T(\varphi)=-0.5\dot\gamma(1-\cos(2\varphi)).
\end{equation*} 
Note that since the background shear flow $\boldsymbol{u}_{\text{BG}}$ has zero $y$-component it does not enter the formula for vertical drag $u$. Expression for $T$ follows from \eqref{expression_for_torque} from Section~\ref{sec:individual-rod} with $T=\Phi_{\text{BG}}$. Parameters $v_{\text{prop}}=0.2$ and $\dot{\gamma}=1.0$ are self-propulsion speed and shear rate, respectively.  

First, we perform Monte Carlo simulations for the individual based model for an active rod with vertical component $y(t)$ of location $\boldsymbol{r}(t)$ and orientation angle $\varphi$ swimming in $\Omega~=~\left\{y>0\right\}$:
\begin{equation}\label{ibm1}
\dot{y}=u(\varphi),\quad \dot{\varphi}=T(\varphi)+\sqrt{2D_{\text{rot}}}\zeta,
\end{equation} 
where $\zeta$ is the white noise with $\langle \zeta(t),\zeta(t')\rangle=\delta(t-t')$ and $D_{\text{rot}}$ is the rotational diffusion coefficient.  The following ``overdamped" collision-with-wall rule for $y(t)=0$ is imposed:  
\begin{equation}\label{ibm2}
\dot{y}|_{y(t)=0}=\left\{\begin{array}{ll}0,& \sin \varphi\leq 0,\\ u(\varphi),& \sin \varphi >0, \end{array}\right.\quad 
\dot{\varphi}|_{y(t)=0}=T(\varphi)+\sqrt{2D_{\text{rot}}}\zeta.
\end{equation}
This collision rule means that the active rod does not move if it is oriented towards the wall, {\it i.e.}, downward, $\sin \varphi(t)\leq 0$. Regardless if it points downward or upward, the active rod's orientation ${\varphi}$ is governed by the same equation as in the bulk. Note that, as it is often done in applications (see, e.g., \cite{PotKaiBerAra2017}) we neglect inertia and translational diffusion in the individual based model.      

\begin{figure}[t]
	\centering
	\includegraphics[width=0.49\textwidth]{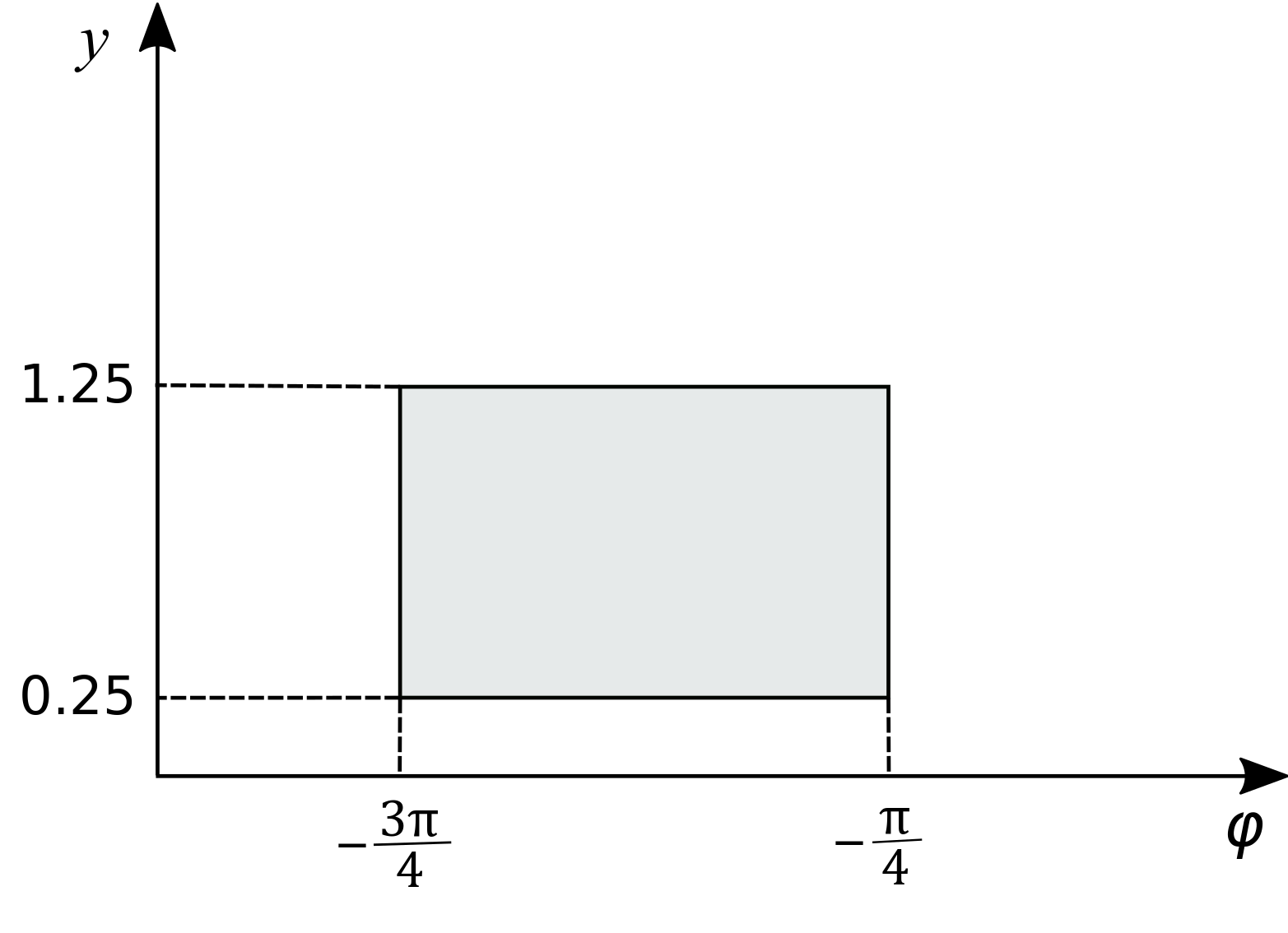}
	\includegraphics[width=0.49\textwidth]{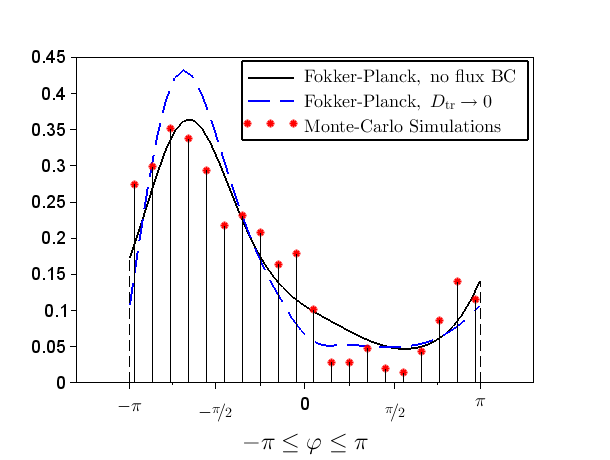}
	
	\medskip 
	\includegraphics[width=0.95\textwidth]{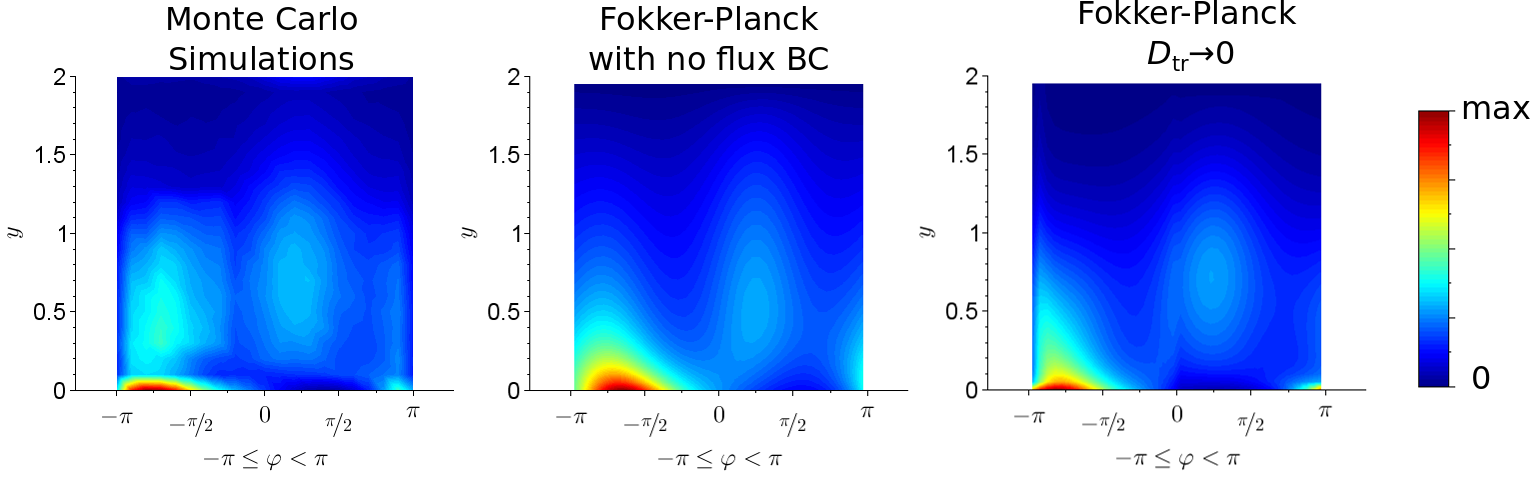}
	
	\caption{Upper Left: support $\{0.25<y<1.25,\,\sin \varphi<-\sqrt{2}/2\}$ of probability distribution function $\rho$ at $t=0$; Upper Right: angular distribution for $t=10$ of accumulated particles at wall by Monte-Carlo simulations (red dots), from Fokker-Planck problem \eqref{fp_num}-\eqref{fp_bc_num} (solid line), and two-scale expansions \eqref{twoscale0} (dashed line);  Lower figures:$(\varphi,y)$-histogram obtained from simulations of \eqref{ibm1}-\eqref{ibm2} (left); $(\varphi,y)$-distribution obtained from solution of \eqref{fp_num}-\eqref{fp_bc_num} (center);   $(\varphi,y)$-distribution obtained from two-scale expansions \eqref{twoscale0}, \eqref{def_of_rho_and_psi} with $\delta=0.05$ and $\rho_{\text{bulk}}$, $\psi_{\text{wall}}$ solving \eqref{eqn:no-diff} (right). All the lower plots are computed for~$t=10$.}%
	\label{fokker_planck_sim}
\end{figure}

Next, we compare results of Monte Carlo simulations for individual based model \eqref{ibm1}-\eqref{ibm2} with the initial boundary value problem  derived in Theorem~\ref{thm:main_mr} from Section~\ref{sec:zero-inertia}, consisting of the Fokker-Planck equation for probability distribution function $\rho(t,y,\varphi)$: 
\begin{equation}\label{fp_num}
\partial_t \rho + v_{\text{prop}}\sin \varphi \partial_y \rho - 0.5\dot{\gamma} \partial_{\varphi}((1-\cos(2\varphi))\rho)=D_{\text{tr}}\partial_{y}^2\rho+D_{\text{rot}}\partial_{\varphi}^2\rho,  
\end{equation}
and no-flux boundary condition: 
\begin{equation}\label{fp_bc_num}
D_{\text{tr}}\partial_y \rho=v_{\text{prop}} \rho\sin \varphi \quad \text{for} \quad y=0.
\end{equation}
Translation diffusion coefficient is chosen to be small, $D_{\text{tr}}=0.05$.

Finally, we simulate \eqref{eqn:no-diff}, derived as the limit $D_{\text{tr}}\to 0$ in \eqref{fp_num}-\eqref{fp_bc_num}.

A sample trajectory obtained from simulating \eqref{ibm1}-\eqref{ibm2} for $0<t<20$ is depicted in Fig.~\ref{fig:sketch}, right. This trajectory is drawn in $\varphi y$-plane, and rod locations within this plane at integer moment of times are marked by red dots while the value of the corresponding moment of time is written above each dot. The trajectory demonstrates typical behavior of an active rod swimming at a wall. After collision with the wall, ($t=4$), the rod attaches to the wall (it still can swim in $x$ direction) and re-orients under background shear decreasing orientation angle $\varphi$ to $\varphi = - \pi$, and then detaches from the wall ($t\approx 5.5$). Swimming with orientation $\varphi$ close to $\pm \pi$ with background flow given by $\boldsymbol{u}_{\text{BG}}=(\dot\gamma y, 0)$ means that the active rod swims upstream, that is, exhibits {\it negative rheotaxis}. 

For Monte Carlo simulations of \eqref{ibm1}-\eqref{ibm2} we chose initial location $y$ and orientation angle $\varphi$ to be random with uniform distributions in intervals $0.25<y<1.25$ and $-3\pi/4<\varphi<-\pi/4$, respectively. The corresponding initial condition for both probability distribution functions $\rho^{\text{I}}$, solution of Fokker-Planck equation \eqref{fp_num} with no flux boundary condition \eqref{fp_bc_num}, and $\rho^{\text{II}}$, given by two-scale expansion \eqref{twoscale0}, \eqref{def_of_rho_and_psi} with $\delta=D_{\text{tr}}=0.05$ and terms $\rho_{\text{bulk}}$ and $\psi_{\text{wall}}$ solving system \eqref{eqn:no-diff}, are 
\begin{equation}\label{initial_condition}
\left\{\begin{array}{rl}\dfrac{2}{\pi},&\quad \quad\dfrac{1}{4}<y<\dfrac{5}{4},\,-\dfrac{3\pi}{4}<\varphi<-\dfrac{\pi}{4},\\&\\0,&\quad \quad\text{otherwise}.\end{array}\right.
\end{equation} 
The initial condition \eqref{initial_condition} is shown in Fig.~\ref{fokker_planck_sim}, upper left. 

\begin{figure}[t]
	\begin{center}
		\includegraphics[width=0.325\textwidth]{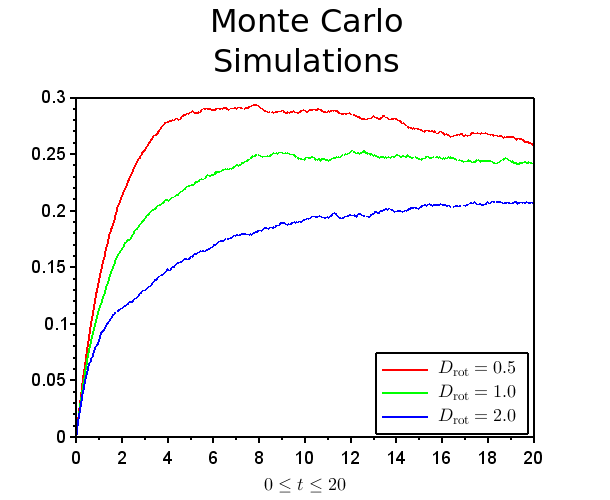}
		\includegraphics[width=0.325\textwidth]{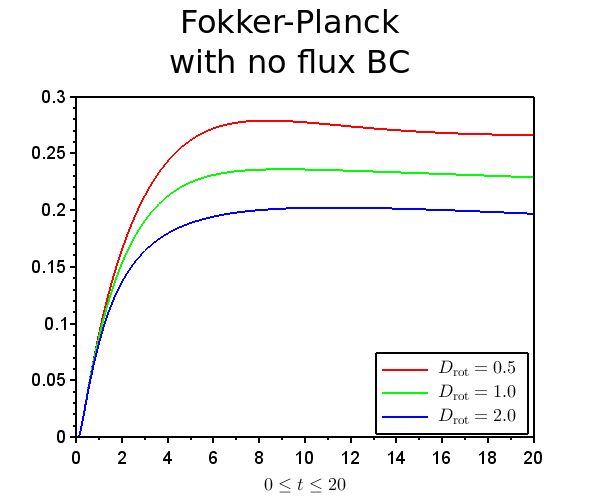}
		\includegraphics[width=0.325\textwidth]{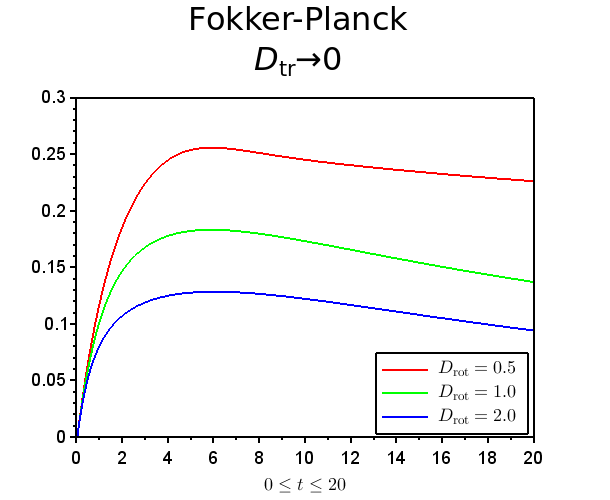}\\
		
		\caption{ Probability that active rod is accumulated at wall for $0<t<20$ from Monte-Carlo simulations (left), Fokker-Planck boundary value problem \eqref{fp_num}-\eqref{fp_bc_num} (center), and two-scale expansion \eqref{twoscale0}, \eqref{def_of_rho_and_psi} with $\delta=0.05$ and $\rho_{\text{bulk}}$, $\psi_{\text{wall}}$ solving \eqref{eqn:no-diff} (right).  }
		\label{fig:hist}
	\end{center}
\end{figure}

Results of numerical simulations are depicted in Fig.~\ref{fokker_planck_sim} and \ref{fig:hist}. Behavior of solutions of all the three problems $-$ Monte Carlo simulations of individual based model \eqref{ibm1}-\eqref{ibm2}, Fokker-Planck equation \eqref{fp_num} with no flux boundary condition \eqref{fp_bc_num}, and  two-scale expansion \eqref{twoscale0}, \eqref{def_of_rho_and_psi} with $\delta=D_{\text{tr}}=0.05$ and terms $\rho_{\text{bulk}}$ and $\psi_{\text{wall}}$ solving system \eqref{eqn:no-diff} $-$ is qualitatively similar. Distributions of location $y$ and orientation $\varphi$ concentrate at $y\approx 0$ and $\varphi\approx -\pi$ (Fig.~\ref{fokker_planck_sim}, lower row). These plots illustrate wall accumulation $y\approx 0$ and negative rheotaxis $\varphi\approx -\pi$. In this numerical example, initial condition \eqref{initial_condition} is chosen such that active rods initially point towards the wall, so the wall accumulation is somewhat enforced. Nevertheless, if domain is confined from all sides, which is a generic situation, an active rod reaches the wall with a high probability and spends a non-zero time at the wall reorienting before been detached. Therefore, the wall accumulation of active rods in confined bounded domains necessarily occurs, and the numerical example in this section investigates the situation when a population of active rods approach a wall. 

We also analyzed the active rod distribution inside boundary layer $\mathcal{L}:=\{0~\leq~y~<~0.2\}$ (accumulated active rods). Comparison of angular distributions at $t=10$, obtained from the three approaches, is given in Fig.~\ref{fokker_planck_sim}, upper right, and the probability of swimming inside boundary layer $\mathcal{L}$, as a function of time $0<t<20$, is depicted in Fig.~\ref{fig:hist}. All the three methods show active rod accumulation increases up to a moment $3<t<5$ (depending on the value of rotational diffusion coefficient $D_{\text{rot}}$; see Fig.~\ref{fig:hist}) and a peak of angular orientations forms close to $\phi = - \pi$ (see Fig.~\ref{fokker_planck_sim}, upper right). However, the third method slightly underestimates probability of swimming inside the boundary layer $\mathcal{L}$ for larger values of $D_{\text{rot}}$ (specifically, for $D_{\text{rot}}=2.0$). This underlines the subtlety, described in the paragraph after \eqref{rho_bulk}, of the multi-scale expansion for $\rho$ with respect to $\delta$ and presence of the scales additional to $1$ and $\delta^{-1}$ in \eqref{twoscale}. Rigorous analysis of the limit $\delta\to 0$ of solution to problem \eqref{FP_no_diff}-\eqref{no_flux_bc_b_layer} as well as explicit convergence estimates are left for our future work.


\section{Individual based model and Fokker-Planck equation for an active rod with inertia}
\label{sec:concrete}


\subsection{Individual based model for an active rod with inertia}
\label{sec:individual-rod}

In this section we present equations governing the motion of an active rod in the domain $\Omega\subset \mathbb R^2$ with non-empty boundary $\Gamma=\partial \Omega$. The rod is assumed to be a non-deformable one-dimensional segment of length $\ell$ swimming in a viscous fluid.  At each moment of time $t$ the rod is characterized by the location of its center of mass $\boldsymbol{r}(t)$ and the orientation angle $\varphi(t)$, so that the unit vector $\boldsymbol{p}(\varphi)=(\cos \varphi,\sin \varphi)$ with $\varphi=\varphi(t)$ determines the orientation of the active rod. Equations for $\boldsymbol{r}(t)$ and $\varphi(t)$ are 
\begin{eqnarray}
m \,\ddot{\boldsymbol{r}}&=&\eta_1 \ell \,(\boldsymbol{u}_{\text{BG}}(\boldsymbol{r})-\dot{\boldsymbol{r}})+F_{\mathrm{thrust}}\boldsymbol{p}(\varphi)+\sqrt{2D_{\mathrm{tr}}}\, {\zeta}_{1},\label{force_balance}\\
I_{\mathrm{rod}} \ddot{\varphi}&=&\eta_2 \ell^2(\mathrm{\Phi}_{\mathrm{BG}}(\boldsymbol{r},\varphi)- \dot{\varphi})+\sqrt{2D_{\mathrm{rot}}}\,{\zeta}_{2}.\label{torque_balance}
\end{eqnarray}  
Here $\rho_0$ is the density of the rod, so $m=\rho_0\ell$ is its mass; $I_{\mathrm{rod}}=\rho_0\ell^3/12$ is the moment of inertia of the  rod around the center of mass; $\eta_1$ and $\eta_2$ are material constants related to the background fluid viscosity. 
Function $\boldsymbol{u}_{\text{BG}}(\boldsymbol{r})$ is the velocity of the background flow at point $\boldsymbol{r}$. We assume that the background flow is not affected by the active rod thus $\boldsymbol{u}_{\text{BG}}$ satisfies the homogeneous Stokes equation in $\Omega$: 
\begin{eqnarray*}
	&-\eta_1\Delta \boldsymbol{u}_{\text{BG}}+\nabla p = 0,& \\
	&\nabla \cdot \boldsymbol{u}_{\text{BG}} = 0. &
\end{eqnarray*}
$F_{\mathrm{thrust}}$ is the magnitude of the thrust force (the self-propulsion force); $D_{\text{tr}}$ and $D_{\text{rot}}$ are translational and rotational diffusion coefficients, respectively; $\zeta_1$ and $\zeta_2$ are uncorrelated white noises with intensities $\langle\zeta_i(t),\zeta_i(t')\rangle=\delta(t-t')$, $i=1,2$. Function $\Phi_{\mathrm{BG}}(\boldsymbol{r},\varphi)$ can be understood as the proportionality coefficient defining the torque due to the background flow exerted on the unit rod at location ${\bf r}$, with orientation angle $\varphi$ and zero angular velocity, and $\Phi_{\mathrm{BG}}(\boldsymbol{r},\varphi)$ is given by 
\begin{equation}
\mathrm{\Phi}_{\mathrm{BG}}(\boldsymbol{r},\varphi)=({\mathbb I}-\boldsymbol{p}{\boldsymbol{p}}^{\mathrm{T}})\nabla \boldsymbol{u}_{\mathrm{BG}}(\boldsymbol{r}){\boldsymbol{p}}\cdot \mathrm{e}_{\varphi},\,\,\,\,\,\,\mathrm{e}_{\varphi}=(-\sin\varphi,\cos \varphi).\label{expression_for_torque}
\end{equation}

Equation \eqref{force_balance} is the force balance for the active rod, and this equation reads as follows. There are three forces exerted on the active rod: ({\it i}) the viscous drag force which is proportional to the relative velocity of the active rod with respect to the background flow, ({\it ii}) the self-propulsion force directed always along the direction of the active rod, and ({\it iii}) the random (Brownian) force. Equation \eqref{torque_balance} is the torque balance for the active rod which implies that there are two torques re-orienting the active rod: viscous and random torques. Here we assume that the self-propulsion force does not affect orientation of the active rod.     

The system \eqref{force_balance}-\eqref{torque_balance} can be written in the following general form 
\begin{eqnarray} 
\dfrac{\text{d}{\boldsymbol{r}}}{\text{d}t}&=& \boldsymbol{v}, \label{eq_r}\\ 
\dfrac{\text{d}\boldsymbol{v}}{\text{d}t}&=&\dfrac{1}{\varepsilon}(\boldsymbol{u}(\boldsymbol{r},\varphi,t) -  \boldsymbol{v})+\sqrt{2D_{\text{tr}}}\,\zeta_1,\label{eq_v} \\ 
\dfrac{\text{d}\varphi}{\text{d}t}&=&\omega, \label{eq_phi}\\ 
\dfrac{\text{d}\omega}{\text{d}t}&=&\dfrac{1}{\varepsilon}(T(\boldsymbol{r},\varphi,t)-\omega)+\sqrt{2D_{\text{rot}}}\,{\zeta}_2.  \label{eq_omega}
\end{eqnarray}
Here $u(\boldsymbol{r},\varphi,t)$ and $T(\boldsymbol{ r},\varphi,t)$ are given smooth and bounded functions and $\varepsilon$ is a small parameter. 


\begin{wrapfigure}[13]{r}{0.5\textwidth}
	 \vspace{-0.2 in}
	\fbox{
		\begin{minipage}{0.45\textwidth}
			\begin{center}
				\includegraphics[width=0.85\textwidth]{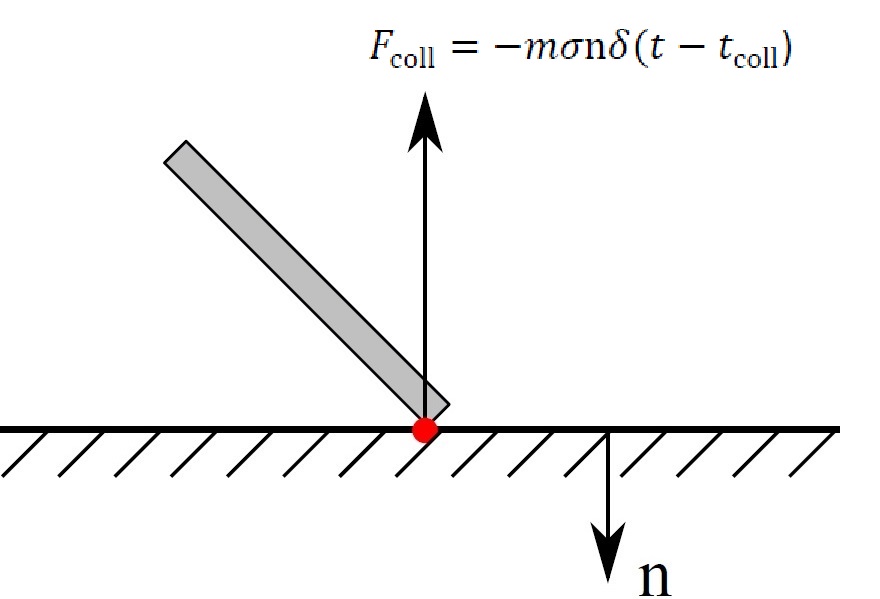}
				\caption{A force exerted on the active rod due to a collision with the wall.}
				\label{fig:collision}
			\end{center}
	\end{minipage}}
\end{wrapfigure}

Next, we describe the rule of collision of the active rod and the wall $\Gamma$. At time of a collision, $t_{\text{coll}}$, an instantaneous force (an impulse)  is exerted on the rod and this force is directed in direction normal to $\Gamma$ (see Fig.~\ref{fig:collision}):
\begin{equation*}
F_{\text{coll}}=-m\sigma \boldsymbol{n}\,\delta(t-t_{\text{coll}}),
\end{equation*}  
where $\boldsymbol{n}$ is the outward normal vector and $\sigma$ will be determined below. One can formally add the force $F_{\text{coll}}$ to the right hand side of equation \eqref{force_balance}. This translates into the following collision rule for velocities 
\begin{equation}
\label{collision_rule_velocities}
\boldsymbol{v}'=\boldsymbol{v}-\sigma \boldsymbol{n} 
\end{equation}
where $\boldsymbol{v}=\dot{\boldsymbol{r}}(t_{\mathrm{coll}}-0)$ and $\boldsymbol{v}'=\dot{\boldsymbol{r}}(t_{\mathrm{coll}}+0)$ are velocities before and after the collision, respectively. The force $F_{\mathrm{coll}}$ is exerted at the front of the active rod which touches the wall $\Gamma$. This implies that the active rod has an additional torque (in the right hand side of \eqref{torque_balance}) due to the collision, which equals to $-m\sigma \frac{\ell}{2}\,(\boldsymbol{p}\times \boldsymbol{n})\delta(t-t_{\text{coll}})$ and thus, the collision rule for angular velocities is 
\begin{equation}
\label{collision_rule_angular_velocities}
\omega'=\omega - \dfrac{6}{\ell}\sigma (\boldsymbol{p}\times \boldsymbol{n})\cdot \mathrm{e}_{z},
\end{equation} 
where $\omega=\dot{\varphi}(t_{\mathrm{coll}}-0)$ and $\omega'=\dot{\varphi}(t_{\mathrm{coll}}+0)$ are angular velocities before and after the collision, respectively, and $\text{e}_{z}=(0,0,1)$ is the unit vector orthogonal to the plane containing domain~$\Omega$.

To completely describe the collision rule, the value of parameter $\sigma$ needs to be determined. To this end, we assume that the collision is perfectly elastic, that is, the kinetic energy does not change in time of the collision:
\begin{eqnarray}\label{conservation_of_kinetic_energy}
&&\dfrac{1}{2}m \boldsymbol{v}^2+\dfrac{1}{2}I_{\mathrm{rod}}\omega^2=\dfrac{m}{2}(\boldsymbol{v}-\sigma\boldsymbol{n})^2+\dfrac{I_{\mathrm{rod}}}{2}(\omega - \dfrac{6}{\ell} \sigma \, (\boldsymbol{p}\times\boldsymbol{n})\cdot \mathrm{e}_z)^2. 
\end{eqnarray}
Expanding the right hand side of \eqref{conservation_of_kinetic_energy} we get the formula for $\sigma$:
\begin{equation}\label{def_of_sigma}
\sigma=\dfrac{2(\boldsymbol{v}\cdot \boldsymbol{n})+\ell\omega (\boldsymbol{p}\times\boldsymbol{n})\cdot \mathrm{e}_z}{1+3|\boldsymbol{p}\times\boldsymbol{n}|^2}.
\end{equation}   

\begin{remark}
	We can extend our consideration to imperfect elastic conditions, that is, we may assume that a part of energy is lost in time of collision: 
	\begin{equation}
	\dfrac{1}{2}m\boldsymbol{v}'^2+\dfrac{1}{2}I_{\mathrm{rod}}\omega'^2=\epsilon\left\{\dfrac{1}{2}m\boldsymbol{v}^2+\dfrac{1}{2}I_{\mathrm{rod}}\omega^2\right\},
	\end{equation}
	where $0\leq \epsilon\leq 1$ is coefficient of restitution which measures how elastic the collision is: if it is $1$, the collision is perfectly elastic; if  $\epsilon=0$, the collision is perfectly inelastic.
\end{remark}
\begin{proposition}\label{prop1}
	Let $\mathcal{C}$ be the linear operator (a $3\times 3$ matrix)\begin{equation}\label{def_of_C}
	\mathcal{C}:(\boldsymbol{v},\omega)\mapsto(\boldsymbol{v}-\sigma\, \boldsymbol{n},\omega-\dfrac{6}{\ell}\sigma (\boldsymbol{p}\times \boldsymbol{n})\cdot \mathrm{e}_z)
	\end{equation} 
	with $\sigma$ from \eqref{def_of_sigma}. Then
	\begin{equation}
	\label{substitution}
	|\mathrm{det}\,\mathcal{C}|=1.
	\end{equation}
\end{proposition}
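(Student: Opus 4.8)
The plan is to exhibit $\mathcal{C}$ as a rank-one perturbation of the identity and then read off its determinant from the matrix determinant lemma. Write $a:=(\boldsymbol{p}\times\boldsymbol{n})\cdot\mathrm{e}_{z}$; since $\boldsymbol{p}$ and $\boldsymbol{n}$ lie in the plane of $\Omega$, $|\boldsymbol{p}\times\boldsymbol{n}|^{2}=a^{2}$, so the denominator appearing in \eqref{def_of_sigma} is $d:=1+3a^{2}$. The key observation is that \eqref{def_of_sigma} exhibits $\sigma$ as a \emph{linear} functional of the velocity triple $(\boldsymbol{v},\omega)\in\mathbb{R}^{3}$, namely $\sigma=\tfrac{1}{d}\,\boldsymbol{m}^{\mathrm{T}}(\boldsymbol{v},\omega)$ with $\boldsymbol{m}:=(2\boldsymbol{n},\ \ell a)^{\mathrm{T}}$. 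Substituting this into \eqref{def_of_C} and recalling \eqref{collision_rule_velocities}--\eqref{collision_rule_angular_velocities}, one gets
\[
\mathcal{C}\,(\boldsymbol{v},\omega)^{\mathrm{T}}=(\boldsymbol{v},\omega)^{\mathrm{T}}-\sigma\,\boldsymbol{w}=\Bigl(\mathbb{I}-\tfrac{1}{d}\,\boldsymbol{w}\,\boldsymbol{m}^{\mathrm{T}}\Bigr)(\boldsymbol{v},\omega)^{\mathrm{T}},\qquad \boldsymbol{w}:=\Bigl(\boldsymbol{n},\ \tfrac{6}{\ell}a\Bigr)^{\mathrm{T}}.
\]

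First I would apply the matrix determinant lemma, $\det\bigl(\mathbb{I}-\tfrac{1}{d}\,\boldsymbol{w}\,\boldsymbol{m}^{\mathrm{T}}\bigr)=1-\tfrac{1}{d}\,\boldsymbol{m}^{\mathrm{T}}\boldsymbol{w}$. Then the only quantity left to compute is the scalar $\boldsymbol{m}^{\mathrm{T}}\boldsymbol{w}=2|\boldsymbol{n}|^{2}+\ell a\cdot\tfrac{6}{\ell}a=2+6a^{2}=2d$, where I used $|\boldsymbol{n}|=1$. Hence $\det\mathcal{C}=1-2d/d=-1$, and in particular $|\det\mathcal{C}|=1$, which is \eqref{substitution}.

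I do not expect a genuine obstacle: the whole content is the bookkeeping that brings $\mathcal{C}-\mathbb{I}$ into rank-one form together with the single contraction $\boldsymbol{m}^{\mathrm{T}}\boldsymbol{w}=2d$ — and this last identity is precisely where the value of $\sigma$ dictated by the elastic-collision constraint \eqref{conservation_of_kinetic_energy} enters. As a sanity check (and an alternative argument worth mentioning), \eqref{conservation_of_kinetic_energy} says that $\mathcal{C}$ preserves the positive-definite quadratic form $Q(\boldsymbol{v},\omega)=\tfrac{m}{2}\boldsymbol{v}^{2}+\tfrac{I_{\mathrm{rod}}}{2}\omega^{2}$, i.e.\ $\mathcal{C}$ is orthogonal with respect to $Q$; since $\mathcal{C}-\mathbb{I}$ has rank one, $\mathcal{C}$ fixes a two-dimensional subspace of $\mathbb{R}^{3}$ pointwise and must therefore be the reflection across it, whose determinant is $-1$. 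Either route gives $|\det\mathcal{C}|=1$.
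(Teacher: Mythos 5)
Your proof is correct, and it takes a different route from the paper's. The paper works in the adapted basis $(v_{\tau},v_{\mathrm{n}},\omega)$, writes $\mathcal{C}$ as an explicit $3\times 3$ matrix with a $2\times 2$ nontrivial block, and evaluates that block's determinant by hand to get $-1$. You instead recognize $\mathcal{C}=\mathbb{I}-\tfrac{1}{d}\,\boldsymbol{w}\,\boldsymbol{m}^{\mathrm{T}}$ as a rank-one perturbation of the identity and invoke the matrix determinant lemma, reducing everything to the single contraction $\boldsymbol{m}^{\mathrm{T}}\boldsymbol{w}=2d$; this is coordinate-free, shorter, and makes transparent exactly where the specific value of $\sigma$ from \eqref{def_of_sigma} enters. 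Your "sanity check" is in fact the most conceptual argument of all and deserves to be more than an aside: since $\sigma$ is a linear functional of $(\boldsymbol{v},\omega)$, the elastic-collision identity \eqref{conservation_of_kinetic_energy} says $\mathcal{C}^{\mathrm{T}}G\,\mathcal{C}=G$ for the positive-definite Gram matrix $G=\mathrm{diag}(m,m,I_{\mathrm{rod}})$, whence $(\det\mathcal{C})^{2}\det G=\det G$ and $|\det\mathcal{C}|=1$ with no computation at all --- the rank-one structure is then only needed to pin down the sign $\det\mathcal{C}=-1$ (which the proposition does not actually require). All three routes agree; yours trades the paper's elementary but opaque arithmetic for structure that explains \emph{why} the result holds.
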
  
\begin{proof}
	Denote $v_{\mathrm{n}}:=\boldsymbol{v}\cdot \boldsymbol{n}$ and $v_{\tau}:=\boldsymbol{v}\cdot \boldsymbol{\tau}$ ($\boldsymbol{\tau}$ is the unit tangent vector on $\Gamma$). We also represent $\sigma$ as follows:
	\begin{equation*}
	\sigma = A v_{\mathrm{n}}+ B \omega, \text{ where } A=\frac{2}{1+3|\boldsymbol{p}\times \boldsymbol{n}|^2}\text{ and }B= \frac{\ell (\boldsymbol{p}\times\boldsymbol{n})\cdot \mathrm{e}_z}{1+3|\boldsymbol{p}\times \boldsymbol{n}|^2}.
	\end{equation*}
	In addition, we introduce angle $\theta$ between vectors $\boldsymbol{p}$ and $\boldsymbol{n}$. Note that $\sin \theta = (\boldsymbol{p}\times\boldsymbol{n})\cdot \mathrm{e}_z$. 
	
	\noindent Then the operator $\mathcal{C}$ can be represented by
	\begin{equation*}
	\mathcal{C}\mathcal{V}=\left[\begin{array}{rrr}1&0&0\\ 0& 1-A&-B\\0&-\dfrac{6}{\ell}\sin \theta\, A &1-\dfrac{6}{\ell}\sin \theta \,B\,\end{array}\right]\left[\begin{array}{c}v_{\tau}\\v_{\mathrm{n}}\\ \omega\end{array}\right].
	\end{equation*}
	
	\noindent Finally, we compute $\mathrm{det}\,\mathcal{C}$: 
	\begin{eqnarray*}
		\mathrm{det}\,\mathcal{C}&=& \left|\begin{array}{rr} 1-A&-B\\-\dfrac{6}{\ell}\sin \theta\, A &1-\dfrac{6}{\ell}\sin \theta\, B\end{array}\ \right|=-1.
	\end{eqnarray*}
	Thus, the proof of proposition is complete.
\end{proof}

\subsection{Fokker-Planck equation for an active rod with inertia}\label{subsec:fp}
For active rods whose motion inside domain $\Omega$ is described by \eqref{eq_r}-\eqref{eq_omega}, the Fokker-Planck equation is 
\begin{eqnarray}\label{fokkerplanck-0}
&&\partial_t \tilde f_{\ve}+\boldsymbol{v}\cdot \nabla_{\boldsymbol{r}} \tilde f_\ve+\dfrac{1}{\ve}\nabla_{\boldsymbol{v}}\cdot\left((\boldsymbol{u}-\boldsymbol{v}) \tilde f_\ve-\ve D_{\mathrm{tr}}\nabla_{\boldsymbol{v}} \tilde f_\ve\right)+ \nonumber
\\
&&\hspace{86pt}+\omega \partial_{\varphi} \tilde f_\ve+\dfrac{1}{\ve}\partial_{\omega}\left(( T-\omega) \tilde f_\ve-\ve D_{\mathrm{rot}}\partial_\omega  \tilde f_\ve\right)=0. 
\end{eqnarray}
The unknown function $ \tilde f_\ve(t,\boldsymbol{r},\boldsymbol{v},\varphi,\omega)$ is the probability distribution function of the active rod's location $\boldsymbol{r}\in \Omega$, translational velocity $\boldsymbol{v}\in \mathbb R^2$, orientation angle $\varphi \in [-\pi,\pi)$, and angular velocity $\omega\in \mathbb R$. 

The collision rule of the active rods with the wall $\Gamma$ is given by \eqref{collision_rule_velocities}-\eqref{collision_rule_angular_velocities}. The rule translates into the following boundary conditions for $ \tilde f_\ve$: 
\begin{equation}\label{bc_fokker-planck}
\boldsymbol{v} \tilde f_\ve\cdot \boldsymbol{n} =\boldsymbol{v}' \tilde f_\ve' \cdot (-\boldsymbol{n}),\,\,\,\boldsymbol{r}\text{ on }\Gamma,
\end{equation} 
where $ \tilde f_\ve'= \tilde f_\ve(t,\boldsymbol{r},\boldsymbol{v}',\varphi,\omega')$ and the pair $(\boldsymbol{v}',\omega')$ is given by collision rule \eqref{collision_rule_velocities}-\eqref{collision_rule_angular_velocities}.

\begin{remark}
	The meaning of boundary condition \eqref{bc_fokker-planck} is as follows: the flux of incident active rods equal to the flux of reflected active rods. In other words, the flux is the same before and after collisions. Note that if one considers spherical particles with no preferred orientation (no $\varphi$ and $\omega$), then the collision rule is $\boldsymbol{v}'=\boldsymbol{v}-2(\boldsymbol{v}\cdot \boldsymbol{n})\boldsymbol{n}$ and \eqref{bc_fokker-planck} is reduced in this case to equality of probability distribution functions $f_\ve=f'_\ve$. This boundary condition is used in many works where particles are assumed to be spherical, see e.g. \cite{GouJabVas2004,MelVas2007,MelVas2008}. We point out here that imposing such boundary conditions (equality of probability distribution functions instead of fluxes as in \eqref{bc_fokker-planck}) for active rods leads to violation of both the mass conservation and the energy relation.    	
\end{remark}

In this work we are interested in the limit $\ve \to 0$. Following \cite{GouJabVas2004,MelVas2007,MelVas2008}, to obtain a meaningful limit we first rescale $\tilde f_\ve$:
\begin{equation*}
f_\ve(t,\boldsymbol{r},\boldsymbol{v},\varphi,\omega)=\tilde f_\ve(t, \boldsymbol{r},\ve \boldsymbol{v},\varphi,\ve \omega).
\end{equation*}
Then the Fokker-Planck equation for the rescaled probability distribution function $f_\ve$ has the following form: 
\begin{eqnarray}\label{fokkerplanck}
&&\partial_t f_{\ve}+\dfrac{1}{\ve}\boldsymbol{v}\cdot \nabla_{\boldsymbol{r}}f_\ve+\dfrac{1}{\ve^2}\nabla_{\boldsymbol{v}}\cdot\left((\ve \boldsymbol{u}-\boldsymbol{v})f_\ve-D_{\mathrm{tr}}\nabla_{\boldsymbol{v}}f_\ve\right)+ \nonumber
\\
&&\hspace{86pt}+\dfrac{1}{\ve}\omega \partial_{\varphi}f_\ve+\dfrac{1}{\ve^2}\partial_{\omega}\left((\ve T-\omega)f_\ve-D_{\mathrm{rot}}\partial_\omega f_\ve\right)=0
\end{eqnarray}
with boundary condition \eqref{bc_fokker-planck} for $f_\ve$.


\section*{Acknowledgment}
\addcontentsline{toc}{section}{Acknowledgement}
The work of MP, LB, ER were supported by NSF DMREF grant DMS-1628411. PEJ was partially supported by NSF Grant 1614537, and NSF Grant RNMS (Ki-Net)
1107444.


\bibliographystyle{ieeetr}
\bibliography{zero_inertia}

\begin{thebibliography}{10}

\bibitem{Ram2010}
S.~Ramaswamy, ``The mechanics and statistics of active matter,'' {\em Annu.
  Rev. Condens. Matter Phys.}, vol.~1, no.~1, pp.~323--345, 2010.

\bibitem{MarJoaRamLivProRaoSim2013}
M.~Marchetti, J.~Joanny, S.~Ramaswamy, T.~Liverpool, J.~Prost, M.~Rao, and
  R.~Simha, ``Hydrodynamics of soft active matter,'' {\em Reviews of Modern
  Physics}, vol.~85, no.~3, p.~1143, 2013.

\bibitem{ElgWinGom2015}
J.~Elgeti, R.~Winkler, and G.~Gompper, ``Physics of microswimmers $-$ single
  particle motion and collective behavior: a review,'' {\em Reports on progress
  in physics}, vol.~78, no.~5, p.~056601, 2015.

\bibitem{HerStoGra05}
J.~Hernandez-Ortiz, C.~Stoltz, and M.~Graham, ``Transport and collective
  dynamics in suspensions of confined swimming particles,'' {\em Phys. Rev.
  Lett.}, vol.~95, p.~204501, 2005.

\bibitem{HaiSokAraBer09}
B.~M. Haines, A.~Sokolov, I.~S. Aranson, L.~Berlyand, and D.~A. Karpeev, ``A
  three-dimensional model for the effective viscosity of bacterial
  suspensions,'' {\em Phys. Rev. E}, vol.~80, p.~041922, 2009.

\bibitem{HaiAroBerKar10}
B.~M. Haines, I.~S. Aranson, L.~Berlyand, and D.~A. Karpeev, ``Effective
  {V}iscosity of bacterial suspensions: A three-dimensional pde model with
  stochastic torque,'' {\em Comm. Pure Appl. Anal.}, vol.~11, no.~1,
  pp.~19--46, 2010.

\bibitem{RyaSokBerAra13}
S.~Ryan, A.~Sokolov, L.~Berlyand, and I.~Aranson, ``Correlation properties of
  collective motion in bacterial suspension,'' {\em New Journal of Physics},
  vol.~15, p.~105021, 2013.

\bibitem{kaiser2014transport}
A.~Kaiser, A.~Peshkov, A.~Sokolov, B.~ten Hagen, H.~L{\"o}wen, and I.~S.
  Aranson, ``Transport powered by bacterial turbulence,'' {\em Physical review
  letters}, vol.~112, no.~15, p.~158101, 2014.

\bibitem{PotTouBerAra2016}
M.~Potomkin, M.~Tournus, L.~Berlyand, and I.~Aranson, ``Flagella bending
  affects macroscopic properties of bacterial suspensions,'' {\em Journal of
  the Royal Society Interface}, vol.~14, no.~130, p.~20161031, 2017.

\bibitem{LopGacDouAurCle2015}
H.~L\'opez, J.~Gachelin, C.~Douarche, H.~Auradou, and E.~Cl\'ement, ``Turning
  bacteria suspensions into superfluids,'' {\em Physical Review Letters},
  vol.~115, no.~2, p.~028301, 2015.

\bibitem{MotTad2011}
S.~Motsch and E.~Tadmor, ``A new model for self-organized dynamics and its
  flocking behavior.,'' {\em Journal of Statistical Physics}, vol.~144, no.~5,
  p.~923, 2011.

\bibitem{CavGia2014}
A.~Cavagna and I.~Giardina, ``Bird flocks as condensed matter,'' {\em Annu.
  Rev. Condens. Matter Phys.}, vol.~5, no.~1, pp.~183--2017, 2014.

\bibitem{Pop2016}
G.~Popkin, ``The physics of life,'' {\em Nature news}, vol.~529, no.~7584,
  p.~16, 2016.

\bibitem{KatTunIoa2011}
Y.~Katz, K.~Tunstr$\o$m, C.~Ioannou, C.~Huepe, and I.~Couzin, ``Inferring the
  structure and dynamics of interactions in schooling fish,'' {\em Proceedings
  of the National Academy of Sciences}, vol.~108, no.~46, pp.~18720 -- 18725,
  2011.

\bibitem{TunKatIoa2013}
K.~Tunstr$\o$m, Y.~Katz, C.~Ioannou, C.~Huepe, M.~Lutz, and I.~Couzin,
  ``Collective states, multistability and transitional behavior in schooling
  fish,'' {\em PLoS computational biology}, vol.~9, no.~2, p.~e1002915, 2013.

\bibitem{PinVelCal2016}
E.~Pin\c{c}e, S.~Velu, A.~Callegari, P.~Elahi, S.~Gigan, G.~Volpe, and
  G.~Volpe, ``Disorder-mediated crowd control in an active matter system,''
  {\em Nature communications}, vol.~7, p.~10907, 2016.

\bibitem{paxton2004catalytic}
W.~F. Paxton, K.~C. Kistler, C.~C. Olmeda, A.~Sen, S.~K. St.~Angelo, Y.~Cao,
  T.~E. Mallouk, P.~E. Lammert, and V.~H. Crespi, ``Catalytic nanomotors:
  autonomous movement of striped nanorods,'' {\em Journal of the American
  Chemical Society}, vol.~126, no.~41, pp.~13424--13431, 2004.

\bibitem{Rot1963}
L.~Rothschild, ``Non-random distribution of bull spermatozoa in a drop of sperm
  suspension,'' {\em Nature}, vol.~198, no.~488, p.~1221, 1963.

\bibitem{BerTur1990}
H.~Berg and L.~Turner, ``Chemotaxis of bacteria in glass capillary arrays.
  {E}scherichia coli, motility, microchannel plate, and light scattering,''
  {\em Biophysical Journal}, vol.~58, no.~4, pp.~919--930, 1990.

\bibitem{RamTulPha1993}
M.~Ramia, D.~Tullock, and N.~Phan-Thien, ``The role of hydrodynamic interaction
  in the locomotion of microorganisms,'' {\em Biophysical Journal}, vol.~65,
  no.~2, pp.~755--778, 1993.

\bibitem{FryForBerCum1995}
P.~Frymier, R.~Ford, H.~Berg, and P.~Cummings, ``Three-dimensional tracking of
  motile bacteria near a solid planar surface,'' {\em Proceedings of the
  National Academy of Sciences: Biophysics}, vol.~92, pp.~6195--6199, 1995.

\bibitem{VigForWagTam2002}
M.~Vigeant, R.~Ford, M.~Wagner, and L.~Tamm, ``Reversible and {I}rreversible
  {A}dhesion of {M}otile {E}scherichia coli {C}ells {A}nalyzed by {T}otal
  {I}nternal {R}eflection {A}queous {F}luorescence {M}icroscopy,'' {\em Applied
  and Environmental Microbiology}, vol.~68, no.~6, pp.~2794--2801, 2002.

\bibitem{BerTurBerLau2008}
A.~Berke, L.~Turner, H.~Berg, and E.~Lauga, ``Hydrodynamic attraction of
  swinning microorganisms by surfaces,'' {\em Physical Review Letters},
  vol.~101, p.~038102, 2008.

\bibitem{HilKalMcMKos2007}
J.~Hill, O.~Kalkanci, J.~McMurry, and H.~Koser, ``Hydrodynamic surface
  interactions enable escherichia coli to seek efficient routes to swim
  upstream,'' {\em Physical review letters}, vol.~98, no.~6, p.~068101, 2007.

\bibitem{fu2012bacterial}
H.~C. Fu, T.~R. Powers, and R.~Stocker, ``Bacterial rheotaxis,'' {\em
  Proceedings of the National Academy of Sciences}, vol.~109, no.~13,
  pp.~4780--4785, 2012.

\bibitem{YuaRaiBau2015rheo}
J.~Yuan, D.~Raizen, and H.~Bau, ``Propensity of undulatory swimmers, such as
  worms, to go against the flow,'' {\em Proceedings of the National Academy of
  Sciences}, vol.~112, no.~12, pp.~3606--3611, 2015.

\bibitem{PalSacAbrBarHanGroPinCha2015}
J.~Palacci, S.~Sacanna, A.~Abramian, J.~Barral, K.~Hanson, A.~Grosberg,
  D.~Pine, and P.~Chaikin, ``Artificial rheotaxis,'' {\em Science Advances},
  vol.~1, no.~4, p.~e1400214, 2015.

\bibitem{PotKaiBerAra2017}
M.~Potomkin, A.~Kaiser, L.~Berlyand, and I.~Aranson, ``Focusing of active
  particles in a converging flow,'' {\em New Journal of Physics}, vol.~19,
  p.~115005, 2017.

\bibitem{rheotaxis2018baker}
R.~Baker, J.~Kauffman, A.~Laskar, O.~Shklyaev, M.~Potomkin, H.~Shum,
  Y.~Cruz-Rivera, I.~Aronson, A.~Balazs, and A.~Sen, ``Fight the flow:
  collective rheotaxis of artificial swimmers in confined systems,'' {\em
  submitted}, 2018.

\bibitem{EzhSai2015}
B.~Ezhilan and D.~Saintillan, ``Transport of a dilute active suspension in
  pressure-driven channel flow,'' {\em Journal of Fluid Mechanics}, vol.~777,
  pp.~480--522, 2015.

\bibitem{GouJabVas2004}
T.~Goudon, P.-E. Jabin, and A.~Vasseur, ``Hydrodynamic limits for
  {V}lasov-{S}tokes equations: {P}art {I}: {L}ight {P}articles {R}egime,'' {\em
  Indiana Univ. Math. J.}, vol.~53, pp.~1495--1513, 2004.

\bibitem{MelVas2007}
A.~Mellet and A.~Vasseur, ``Global weak solutions for a
  {V}lasov-{F}okker-{P}lanck/{N}avier-{S}tokes system of equations,'' {\em
  Mathematical Models and Methods in Applied Sciences}, vol.~17,
  pp.~1039--1063, 2007.

\bibitem{MelVas2008}
A.~Mellet and A.~Vasseur, ``Asymptotic analysis for a
  {V}lasov-{F}okker-{P}lanck/compressible {N}avier-{S}tokes system of
  equations,'' {\em Communications in Mathematical Physics}, vol.~281,
  pp.~573--596, 2008.

\bibitem{Hol2012}
M.~Holmes, {\em Introduction to {P}erturbation {M}ethods}.
\newblock Springer Science \& Business Media, 2012.

\bibitem{van1976}
A.~van Harten, ``On an elliptic singular perturbation problem,'' {\em In {\it
  Ordinary and partial differential equations}}, pp.~485--495, 1976.

\end{thebibliography}

\end{document}